\pgfplotsset{compat=1.7}
\tikzset{mycolor/.style = {dashed,rounded corners,line width=1bp,color=#1}}%
\tikzset{myfillcolor/.style = {draw,fill=#1}}%
\tikzset{
	declare function={
		normcdf(\x,\m,\s)=1/(1 + exp(-0.07056*((\x-\m)/\s)^3 - 1.5976*(\x-\m)/\s));
	}
}
\renewcommand{\baselinestretch}{1.6} 
\newcommand{\single}{\renewcommand{\baselinestretch}{1.2}\normalsize}
\newcommand{\double}{\renewcommand{\baselinestretch}{1.63}\normalsize}
\newcommand{\bea}{\begin{eqnarray*}}
\newcommand{\eea}{\end{eqnarray*}}
\newcommand{\be}{\begin{eqnarray}}
\newcommand{\ee}{\end{eqnarray}}
\newcommand{\ed}{\end{document}}
\newcommand{\btab}{\begin{tabular}}
\newcommand{\etab}{\end{tabular}}
\newcommand{\bi}{\begin{itemize}}
\newcommand{\ei}{\end{itemize}}
\newcommand{\bfi}{\begin{figure}}
\newcommand{\efi}{\end{figure}}
\newcommand{\ben}{\begin{enumerate}}
\newcommand{\een}{\end{enumerate}}
\newcommand{\bay}{\begin{array}}
\newcommand{\eay}{\end{array}}
\definecolor{DarkBlue}{rgb}{0,.08,.45}
\definecolor{DarkRed}{rgb}{.7,0,.4}
\def\hg #1 {\texcolor{cyan}{{\it Hans:}   #1}}
\def\bco{\iffalse}
\def\var{{\rm var}}
\newcommand{\no}{\noindent}
\newcommand{\bc}{\begin{center}}
\newcommand{\ec}{\end{center}}
\newcommand{\bsp}{\begin{split}}
\newcommand{\esp}{\end{split}}
\newcommand{\bdes}{\begin{description}}
\newcommand{\edes}{\end{description}}
\newcommand{\bass}{\begin{assumption}}
\newcommand{\eass}{\end{assumption}}
\newcommand{\bthm}{\begin{theorem}}
\newcommand{\ethm}{\end{theorem}}
\newcommand{\blem}{\begin{lemma}}
\newcommand{\elem}{\end{lemma}}
\def\bco{\iffalse}
\def\var{{\rm var}}
\newtheorem{assumption}{Assumption}
\newtheorem{theorem}{Theorem}
\newtheorem{corollary}{Corollary}
\newtheorem{example}{Example}
\newtheorem{remark}{Remark}
\newtheorem{lemma}{Lemma}
\newtheorem*{definition}{Definition of Marginal Homogeneity}
\newcommand{\JL}[1]{{\color{cyan}\bf [JL: #1]}}
\begin{document}

\def\spacingset#1{\renewcommand{\baselinestretch}%
{#1}\small\normalsize} \spacingset{1}

\thispagestyle{empty} \single \bc {\bf \sc \Large Testing Homogeneity: The Trouble with Sparse Functional Data}
\vspace{0.15in}\\
Changbo Zhu and  Jane-Ling Wang \\
Department of Statistics, University of California, Davis \\
Davis, CA 95616 USA \ec \centerline{2 July 2022}

\vspace{0.1in} \thispagestyle{empty}
\bc{\bf \sf ABSTRACT} \ec \vspace{-.1in} \no 
\setstretch{1} 
Testing the homogeneity between two samples of functional data is an important task. While this is feasible for intensely measured functional data, we explain why  it is challenging for sparsely measured functional data and show what can be done for such data. In particular, we show that testing the marginal homogeneity based on point-wise distributions is feasible under some constraints and propose a new two sample statistic that works well with both intensively and sparsely measured functional data. The proposed test statistic  is formulated upon Energy distance, and the critical value is obtained via the permutation test. The convergence rate of the test statistic to its population version is derived along with the consistency of the associated permutation test. To the best of our knowledge, this is the first paper that provides guaranteed consistency for testing the homogeneity for sparse functional data. The aptness of our method is demonstrated on both synthetic and real data sets. \\   

\no {KEY WORDS:\quad Longitudinal data; Sparse functional data; Two sample test; Energy distance; Measurement errors; Convergence rate}.
\thispagestyle{empty} \vfill
\noindent \vspace{-.2cm}\rule{\textwidth}{0.5pt}\\
{\small Changbo Zhu acknowledges the support of NIH Echo UH3OD023313. The research of Jane-Ling Wang is supported by NSF grant DMS-1914917.} 

\newpage
\pagenumbering{arabic} \setcounter{page}{1} \double

\section{Introduction}
Two sample testing of equality of  distributions, which is the homogeneity hypothesis, is fundamental in statistics and has a long history that dates back to \cite{kolmogorov1933,smirnov1948,cramer1928,von1928}. The literature on this topic can be categorized by the data type considered. Classical tests designed for low dimensional data include \cite{bickel1969, bickel1983, friedman1979, henze1988, schilling1986} among others. For recent developments that are applicable to data of arbitrary dimension, we refer to Energy Distance (ED) \citep{szekely2004} and Maximum Mean Discrepancy (MMD) \citep{sejdinovic2013nips}. To suit high dimensional regimes \citep{hall2005, aoshima2018survey},  extensions of Energy Distance and Maximum Mean Discrepancy were studied in \cite{zhu2019interpoint, chakraborty2019new, gao2021two}. Some other interesting developments of Energy distance for data residing in a metric space include \cite{lyons2013} and \cite{klebanov2005n}.
In this paper, we focus on  functional data that are random samples of functions on a real interval, e.g. $[0, 1] $ 
\citep{ramsay2004functional, hsing2015theoretical, 
doi:10.1146/annurev-statistics-041715-033624, davidian2004introduction}.

Two sample inference for functional data is gaining more attention due to the explosion of data that can be represented as functions. A substantial literature  has been devoted to  comparing the mean and covariance functions between two groups of curves, see \cite{fan1998test, cuevas2004anova,10.2307/20441490, doi:10.1080/15598608.2010.10412005, doi:10.1198/jasa.2010.tm09239, horvath2012inference,zhang2014one, STAICU20151, pini2016interval, 10.1093/biomet/asw033,     zhang2019new,    guo2019new, yuan2020hypothesis, 10.1214/21-EJS1802}. However, the underlying distributions of two random functions can have the same mean and covariance function, but differ in other aspects. Testing homogeneity, which refers to a hypothesis testing procedure to determine the equality of the underlying distributions of any two random objects, is thus of particular interest and practical importance.
The literature on testing the homogeneity for functional data is much smaller and restricted to the two-sample case based on either fully observed \citep{cabana2017permutation, benko2009common, wynne2020kernel,krzysko2021two} or intensely measured functional data \citep{hall2007two, jiang2019asymptotics}.

For intensively measured functional data, a presmoothing step is often adopted to  each individual curve in order to construct a smooth curve before carrying out subsequent analysis, which may  reduce mean square error \citep{10.2307/24310140} or, with luck, remove the noise, a.k.a. measurement error, in the observed discrete data. 
 This results in a two-stage procedure, smoothing first and then testing homogeneity based on the presmoothed curves. For example, \cite{hall2007two, jiang2019asymptotics} adopted such an approach. 
In addition, the energy distance in \citep{szekely2004} could be extended to the space of $L^2$ functions to characterize the distribution differences of random functions,  provided the functional data are  fully observed without errors \citep{klebanov2005n}. 

Specifically, the energy distance between two random functions $X$ and $Y$  is defined as
\begin{align}
\label{eq:ed}
\text{ED}(X,Y) = 2E[\| X-Y \|_{L^2}] - E[\| X-X' \|_{L^2}] - E[\| Y-Y' \|_{L^2}],
\end{align}
where $X',Y'$ are i.i.d copies of $X,Y$ respectively and $ \| X-Y \|_{L^2} = (\int_{0}^1 (X(t) - Y(t))^2 dt)^{1/2} $. According to the results of  \cite{lyons2013} and \cite{klebanov2005n}, $ \text{ED}(X,Y) $  fully characterizes the distributions of $X$ and $Y$ in the sense that $ \text{ED}(X,Y) \geq 0 $ and $ \text{ED}(X,Y)=0 \Leftrightarrow X =^d Y $, where  we use $X=^d Y$ to indicate that $X,Y$ are identically distributed. Given two samples of functional data $\{ X_i \}_{i=1}^n$ and $\{Y_i\}_{i=n+1}^{n+m},$ which are intensively measured at some discrete time points, the reconstructed functions, denoted as $\{ \widehat{X}_i \}_{i=1}^n$ and $\{ \widehat{Y}_i\}_{i=n+1}^{n+m}$, can be obtained using the aforementioned presmoothing procedure. Then $\text{ED}(X,Y)$ can be estimated by the $U$-type statistic
\begin{multline} \label{eq:2step}
\text{ED}_n (X,Y) =  \frac{2}{mn} \sum\limits_{i_1=1}^{n} \sum\limits_{i_2=n+1}^{n+m} \| \widehat{X}_{i_1}-\widehat{Y}_{i_2} \|_{L^2} \\ - \frac{2}{n(n-1)}  \sum\limits_{1 \leq i_1 < i_2 \leq n} \| \widehat{X}_{i_1}-\widehat{X}_{i_2} \|_{L^2} -  \frac{2}{m(m-1)} \sum\limits_{n+1 \leq i_1 < i_2 \leq n+ m} \| \widehat{Y}_{i_1}-\widehat{Y}_{i_2} \|_{L^2}.
\end{multline}

While the above presmoothing procedure to reconstruct the original curves  may be promising for intensely measured functional data, it has not yet been utilized  to our knowledge,  perhaps  due to the technical and practical challenges to implement this approach as the level of intensity in the measurement schedule and the proper amount of smoothing are both critical.  First, the distance between the reconstructed functional data and its target (the true curve) needs to be tracked and reflected in the subsequent calculations.  Second, such a  distance would depend on the intensity of the measurements and  the bandwidth used in the presmoothing stage. Neither is easy to nail down in practice.  

Furthermore, in real world applications, such as in longitudinal studies, each subject often may  only have a few measurements, leading to sparse functional data \citep{yao2005functional}. Here presmoothing individual data no longer works and one must borrow information from all subjects to reconstruct  the trajectory of an individual subject. 
The PACE approach in \citet{yao2005functional} offers such an imputation method, yet it does not lead to consistent estimates of  the true curve  for sparsely observed functional data as  there are  not enough data available for each individual. 
Consequently, the quantities $E[\| X-Y \|_{L^2}], E[\| X-X' \|_{L^2}]\  \text{and} \  E[\| Y-Y' \|_{L^2}]$ in equation \eqref{eq:ed} are not consistently estimable as these expectations are outside the corresponding $L^2$ norms. 
\cite{10.2307/24773028} reduced the problem to testing the homogeneity of the scores of the two processes by assuming that the random functions are finite dimensional. Such an approach would not be consistent either as the scores still cannot be consistently estimated for sparse functional data. To our knowledge, there exists no consistent  test of homogeneity for sparse functional data.  In fact, it is not feasible  to test full homogeneity based on sparsely observed functional data as there are simply not enough data for such an ambitious goal.

This seems disappointing, since although it has long been recognized that sparse functional data are much more challenging to handle than intensively measured functional data, much progress has been made to resolve this challenge. For instance, both the  mean and covariance function can be estimated 
consistently at a certain rate \citep{yao2005functional, li2010uniform, zhang2016sparse} for sparsely observed functional data. Moreover, the regression coefficient function in a functional linear model can also be estimated consistently with rates  \citep{yao2005regression}. This motivated us to explore a less stringent concept of homogeneity that can be tested consistently for sparse functional data. 
In this paper we provide the answer by proposing a  test of marginal homogeneity for two independent samples of functional data. For ease of presentation we assume that the random functions are defined on the unit interval $[0, 1]$. 
\begin{definition} 
Two random functions $X$ and $Y$ defined on [0, 1] 
are  marginal homogeneous if $$\ \; X(t) =^d Y(t) \text{ for almost all } t \in [0,1] .$$
\end{definition}

From this definition we can see that, unlike testing   homogeneity that involves testing the entire distribution of functional data, testing marginal homogeneity only involves simultaneously testing  the marginal distributions at all time points.   This is a much more manageable task that works for all sampling designs, be it intensively or sparsely observed functional data, and it is often adequate in many applications. Testing  marginal homogeneity is not new in the literature and has been investigated by \cite{zhu2019interpoint, chakraborty2019new} for high-dimensional data. They show that the marginal tests can be more powerful than their joint counterparts under the high dimensional regime. In a larger context, the idea of aggregating marginal information originates from \cite{zhu2020distance}, where they consider a related problem of testing the independence between two high-dimensional random vectors. 

For real applications of testing marginal homogeneity, taking the analysis of biomarkers over time in clinical research as an example, comparing differences between marginal distributions of the treatment and control groups 
may be sufficient  to establish the  treatment effect. 
To contrast stocks in two different sectors, the differences between marginal distributions might be more important than the differences between joint distributions. In addition, differences between marginal distributions can be seen as the main effect of differences between  distributions. Thus, it makes good sense to test marginal homogeneity, especially in situations where joint distribution testing is not feasible or inefficient. 

Let $\lambda$ be the Lebesgue measure on $\mathbb{R}$. The focus of this paper is  to test
\begin{align} \label{hypotheses}
\begin{array}{c}
H_0: X(t) =^d Y(t) \text{ for almost all } t \in [0,1], \\
\text{versus} \\
H_A : \text{there exists a set } \mathbb{T} \subseteq [0,1] \text{ such that } \lambda(\mathbb{T}) >0 \text{ and } X(t) \neq^d Y(t) \text{ if } t \in \mathbb{T}.
\end{array}
\end{align}

This can be accomplished through the  marginal energy distance (MED) defined as:
\begin{align} \label{MED}
\text{MED}(X,Y)=\int 2E \left[ \left| X(t) - Y(t) \right| \right]  -  E \left[   \left| X(t) - X'(t) \right|  \right] - E \left[   \left| Y(t) - Y'(t) \right| \right]dt ,
\end{align}
where $X'$ and $ Y'$ are independent copies of $X$ and $Y$ respectively. Indeed, MED is a metric for marginal distributions in the sense that $ \text{MED}(X,Y) \geq 0 $ and $\text{MED}(X,Y)=0 \Leftrightarrow X(t) = ^d Y(t)$ for almost all $t \in [0,1]$. 
A key feature of our approach is that it can consistently estimate  $\text{MED}(X,Y)$ for all types of sampling plans. Moreover,  $ E \left[ \left| X(t) - Y(t) \right| \right] $, $ E \left[   \left| X(t) - X'(t) \right|  \right] $ and $ E \left[   \left| Y(t) - Y'(t) \right| \right] $ can all be reconstructed consistently for both intensively and sparsely observed functional data. Such a unified procedure for all kinds of sampling schemes may be more practical as the separation between intensively and sparsely observed functional data is usually unclear in practical applications. Moreover, it could happen that while some of   the subjects are intensively observed, others are sparsely observed. In the extremely sparse case,  our method can still work if each subject only has one measurement.


Measurement errors (or noise) are common for functional data, so it is important to accommodate them. If noise is left unattended,  there will be 
bias in the estimates of MED 
as the observed distributions are no longer the true distributions of $X$ and $Y$. One might hope that the measurement errors can be averaged out during the estimation of the function $E[ | X(t) - Y(t) |]$ in $\text{MED}(X,Y)$ 
in analogy to  estimating the mean function $\mu(t) = E[X(t)]$ or covariance function $C(s,t) = E[(X(t) - \mu(t))(X(s)  -\mu(s))] $  \citep{yao2005functional}. However,   this is not the case. To see why, let $e_1$, $e_1'$, $e_2$ and $e_2'$ be independent white noise. When estimating mean or covariance function at any fixed time $t, s \in [0,1] $, it holds that $\mu(t) =  E[X(t) + e_1] $ and $C(s,t) = E [(X(t) - \mu(t) + e_1) ( X(s) - \mu(s) + e_1' )] $ for $t \neq s$. 
But $ E[| X(t) - Y(s) + e_1 - e_2 |] \neq E[ | X(t) - Y(s) |] $. Likewise, we can see that the energy distance ED in \eqref{eq:ed} would have  the same challenge to handle measurement errors unless these errors  were removed in a presmoothing step before carrying out the test.  So the challenges  with measurement errors  is not triggered by the use of the $L^1$ norm in MED. The $L^2$ norm in ED will  face the same challenge.

For intensely measured functional data, a presmoothing step is often used to handle  measurement errors in the observed data in the hope that smoothing will remove the error.   However, this is a delicate issue, as   
it is difficult to know the amount of smoothing needed in order for the subsequent analysis to retain the same convergence rate as if the true functional data were fully observed without errors. For instance, \cite{10.1214/009053606000001505} study the effects of smoothing to obtained reconstructed curves and show that in order to retain the same convergence rate of mean estimation for fully observed functional data  the number of measurements per subject that generates the curves must be of higher order than the number of independent subjects. This requires functional data that are intensively sampled well beyond ultra dense (or dense) functional data that have been studied  in the literature 
\citep{zhang2016sparse}. 

In this paper, we propose a new way of handling measurement errors so that the $\text{MED}$-based testing procedure is still consistent in  the presence of measurement errors. The key idea is to show that when the measurement errors $e_1$ and $ e_2$ of $X$ and $Y$, respectively, are identically distributed, i.e, $e_1 = ^d e_2$, the $\text{MED}(X,Y)$-based approach can still be applied to the contaminated data with consistency guaranteed under mild assumptions (cf. Corollary \ref{cor:powerconta}). 
When $e_1 \neq^d e_2$, we 
propose an error-augmentation approach, which can be applied jointly with our unified estimation procedure. 

The rest of the paper is organized as follows.  Section \ref{sec:3} contains the main methodology and supporting theory about testing marginal homogeneity. Numerical studies are presented in Section \ref{sec:sim} The conclusion is in Section \ref{sec:con} All technical details are postponed to Section \ref{sec:tec}

\section{Testing Marginal Homogeneity} \label{sec:3}

We first consider the case where there are no measurement errors and postpone the discussion of measurement errors to the end of this section. Let $\{ X_i \}_{i=1}^n$ and $\{Y_i\}_{i=n+1}^{n+m}$ be  i.i.d copies of $X$ and $Y$,  respectively. In practice, the functions are only observed at some discrete points, i.e.,
\begin{align*}
\begin{array}{ll}
x_{ij} = X_i(T_{ij}), & \text{ if } i = 1,2, \dots, n, \ j = 1,2, \dots, N_i , \\
y_{ij} = Y_{i}(T_{ij}), & \text{ if } i = n+1, \dots,n+ m, \  j = 1,2, \dots, N_i.
\end{array}
\end{align*}
This sampling plan allows the two samples  to be measured at different schedules and additionally each subject within the sample  has its own measurement schedule. This is a realistic assumption but the consequence is that two-dimensional smoothers will be needed to estimate the targets. Fortunately,  the convergence rate of our estimator attains the same convergence rate as that of a one-dimensional  smoothing method. This intriguing phenomenon will be explained later.

For notational convenience, denote $ Z_i = X_i,  \text{ if } i = 1,2, \dots, n$ and $ Z_i = Y_i,  \text{ if } i =n+1, \dots, n+m $. Let $\mathbf{Z} = (\mathbf{z}_1 ; \dots ;  \mathbf{z}_{n+m}  )$ be the combined observations, where for $i=1,2, \dots, n+m$, $\mathbf{z}_i $ is a vector of length $N_i$,
\begin{align*}
\mathbf{z}_i= (z_{i1}, z_{i2}, \dots, z_{i,N_i})^T = \left\lbrace
    \begin{array}{ll}
    (x_{i1}, x_{i2}, \dots , x_{i,N_i})^T,   & \text{ if } 1 \leq i \leq n,  \\
    (y_{i1}, y_{i2}, \dots , y_{i,N_i})^T,   & \text{ if } n+ 1 \leq i \leq n +m.
    \end{array} \right.
\end{align*}
The observations corresponding to $X$ and $Y$ are defined as  $\mathbf{X} = (\mathbf{z}_1; \dots ; \mathbf{z}_n)$ and $ \mathbf{Y} = (\mathbf{z}_{n+1}; \dots ; \mathbf{z}_{n+m})$ respectively. 

To estimate $\text{MED}(X,Y)$ in (\ref{MED}), note that we actually have no observations for the one-dimensional functions $ E \left[ \left| X(t) - Y(t) \right| \right] $, $E \left[ \left| X(t) - X'(t) \right| \right]$ and $ E \left[ \left| Y(t) - Y'(t) \right| \right] $,  due to the longitudinal design where $X$ and $Y$ are observed at different time points. Thus, the sampling schedule for $X$ and $Y$ are not synchronized.  A consequence of such asynchronized functional/longitudinal data is that a one-dimensional smoothing method that has  typically been employed to estimate a one-dimensional target function, e.g. $ E \left[ \left| X(t) - Y(t) \right| \right], $ does not work here. However, a workaround is to  estimate the following two-dimensional functions first: 
\begin{align*}
&G_{1}(t_1,t_2) := E\left[  \left| X(t_1) - Y(t_2) \right| \right], \\
&G_{2} (t_1,t_2):= E \left[   \left| X(t_1) - X'(t_2) \right|  \right], \\
&G_{3}(t_1,t_2):= E \left[   \left| Y(t_1) - Y'(t_2) \right| \right], 
\end{align*}
then set $t_1=t_2=t $ in all three estimators. 
Since $G_1, G_2, G_3$ can all be recovered by some local linear smoother, $\text{MED}(X,Y)$ admits consistent estimates for both intensively and sparsely observed functional data. For instance, $G_{1}(t_1,t_2)$ can be estimated by $\widehat{G}_{1}(t_1,t_2) = \hat{\beta}_0$, where
\begin{multline} \label{eq:xy}
(\hat{\beta}_0, \hat{\beta}_1, \hat{\beta}_2) =  \underset{\beta_0, \beta_1, \beta_2}{\text{argmin}} \frac{1}{nm} \sum_{1 \leq i_1 \leq n} \sum_{n+1 \leq i_2  \leq n+ m} \frac{1}{N_{i_1}} \frac{1}{N_{i_2}} \sum_{j_1=1}^{N_{i_1}} \sum_{j_2=1}^{N_{i_2}} K_{h_x}(T_{i_1 j_1} - t_1) \times \\   K_{h_y}(T_{i_2 j_2}- t_2)   \left[\left| z_{i_1 j_1} - z_{i_2 j_2} \right|  - \beta_0 - \beta_1 (T_{i_1j_1}-t_1)  - \beta_2 (T_{i_2j_2} - t_2)  \right]^2,
\end{multline}
and $G_2(t_1, t_2)$ can be estimated by $\widehat{G}_{2}(t_1, t_2) = \hat{\alpha}_0$, where
\begin{multline} \label{eq:xx}
(\hat{\alpha}_0, \hat{\alpha}_1, \hat{\alpha}_2) = \underset{\alpha_0, \alpha_1, \alpha_2}{\text{argmin}} \frac{2}{n(n-1)} \sum_{1 \leq i_1 < i_2 \leq n}  \frac{1}{N_{i_1}} \frac{1}{N_{i_2}}  \sum_{j_1=1}^{N_{i_1}} \sum_{j_2=1}^{N_{i_2}}  K_{h_x}(T_{i_1j_1}- t_1)   K_{h_x}(T_{i_2j_2}- t_2)  \\  [ \left| z_{i_1j_1} - z_{i_2 j_2} \right| - \alpha_0 - \alpha_1 (T_{i_1j_1}-t_1)  - \alpha_2 (T_{i_2 j_2} - t_2) ]^2.
\end{multline}
$G_{3}(t_1,t_2)$ can be estimated similarly as $G_{2}(t_1,t_2)$ by an estimator $\widehat{G}_{3}(t_1,t_2)$. In the above, $N_{i1}$ and $N_{i2}$ should be understood as the respective length of the vector $\mathbf{z}_{i1}$ and $\mathbf{z}_{i2}$  
and  $K_{h}(\cdot) = K(\cdot/h)/h$ is a one-dimensional kernel with bandwidth $h$.
The sample estimate of $ \text{MED}(X,Y) $ can be constructed as
\begin{align} \label{eq:MED}
\text{MED}_n(\mathbf{Z}) := \int_{0}^1 2 \widehat{G}_{1}(t,t) - \widehat{G}_{2}(t, t) - \widehat{G}_{3}(t, t)  dt.
\end{align}



For hypothesis testing, the critical value or $p$-value can be determined by permutations \citep{lehmann2006testing}. To be more specific, let $\pi : \{ 1,2, \dots, n+m \} \rightarrow  \{ 1,2, \dots, n+m \}$ be a permutation. There are $(n+m)!$ number of permutations in total and we denote the set of permutations as $\mathbb{P}_{n+m} = \{ \pi_l : l=1,2, \dots, (n+m)! \}$. For $l=1,2, \dots, (n+m)!$, define the permutation  of $\pi_l$ on $\mathbf{Z}$ as:
\begin{align}
    \pi_l \cdot \mathbf{Z} = (\mathbf{z}_{\pi_l(1) } ; \mathbf{z}_{\pi_l(2) }; \dots ;  \mathbf{z}_{\pi_l(n+m)}  ).
\end{align}
Write the statistic that is based on the permuted sample $\pi_l \cdot \mathbf{Z} $ as $ \text{MED}_n(\pi_l \cdot \mathbf{Z}) $ and let  $ \Pi_1, \dots, \Pi_{S-1} $ be i.i.d and uniformly sampled from $ \mathbb{P}_{n+m} $, we define the permutation based $p$-value as 
\begin{align*}
    \widehat{p} = \frac{1}{S} \left\{ 1 + \sum\limits_{l=1}^{S-1}  \mathbb{I}_{ \left\lbrace \text{MED}_n( \Pi_l \cdot \mathbf{Z}) \geq \text{MED}_n(  \mathbf{Z})  \right\rbrace } \right\}.
\end{align*}
Then, the level-$\alpha$ permutation test w.r.t. $\text{MED}_n( \mathbf{Z})$ can be defined as:
	$$
	\text{Reject } H_0,  \text{ if } \widehat{p} \leq \alpha. 	
	$$

\subsection{Convergence Theory}
In this subsection, we show that $\text{MED}_{n}(\mathbf{Z})$ is a consistent estimator and develop its convergence rate. 
\begin{assumption} \label{ass:1} 
	\begin{itemize}
		\item[A.1] The kernel function $K(\cdot) \geq 0$ is symmetric, Lipschitz continuous, supported on $[-1, 1]$ and satisfies 
		\begin{align*}
		\int K(u)du = 1, \;  \int_{0}^1 u^2 K(u) du < \infty \text{ and } \int_{0}^1 K(u)^2 du < \infty.
		\end{align*}
		\item[A.2] Let $\{ T_{ij} :  1 \leq i \leq n, 1 \leq j \leq N_i \} \sim^{i.i.d} T_x$, $\{ T_{ij}  :  n+ 1 \leq i \leq n +m, 1 \leq j \leq N_i \} \sim^{i.i.d} T_y$ and denote  the density functions of $T_x, T_y$ by  $g_{x}$, $g_y$ respectively. There exists constants $c$ and $ C$ such that $0 < c \leq g_{x} (s), g_{y} (t) \leq C < \infty$ for any $s,t \in [0, 1]$.
		\item[A.3] $\{ X_{i_1}, Y_{i_2}, T_{ij} : 1 \leq i_1 \leq n, n+1 \leq i_2 \leq n+m, 1 \leq i \leq n+m, 1 \leq j \leq N_i \}$ are mutually independent.
		\item[A.4]  The second order partial derivatives of $G_{1}, G_{2}, G_{3}$ are bounded on $[0,1]$.
		\item[A.5]  $ \sup_{t} E|X(t)|^2 < \infty $ and $\sup_{t} E|Y(t)|^2 < \infty  $.
	\end{itemize}
\end{assumption}

\begin{remark} \label{eq:rmk}
	Conditions A.1 - A.3 and A.5 are fairly standard and also used in \cite{li2010uniform}.
	Condition A.4 may seem  a bit problematic at first, as the absolute value function $| \cdot | $ is not differentiable at 0. However, its expectation can easily be differentiable. For instance, if  the density functions of $X(t)$, $Y(t)$ are  $f_{x}(\cdot|t)$, $f_{y}(\cdot|t)$ respectively, then we have $G_{1}(s,t) = \int\int |u-v| f_{x}(u|s)f_{y}(v|t) dudv$. Assuming the conditions of the  Leibniz integral rule, we can interchange the partial derivatives and integration, i.e.,
	\begin{align*}
	    \frac{\partial^2 }{\partial s \partial t} G_{1}(s,t) = \int\int |u-v|  \frac{\partial }{\partial s} f_{x}(u|s) \frac{\partial }{\partial t} f_{y}(v|t) dudv.
	\end{align*}
	Thus, the partial derivatives of $G_{1}(s,t)$ are bounded if the second order partial derivatives of $ f_{x}(u|s) $, $ f_{y}(v|t) $ w.r.t. $s,t$ exist for all $u,v$ and 
	\begin{align}\label{eq:partial}
    \sup_{u} \left| \frac{\partial^2 }{ \partial s \partial s }  f_{x}(u|s)  \right| < \infty \text{ and } \sup_{v} \left| \frac{\partial^2 }{ \partial t \partial t }  f_{y}(v|t) \right| < \infty.
    \end{align}
	A more  specific example is when  $X(t)$ is Gaussian and $Y(t)$ is a mixture of Gaussians with density functions 
	\begin{align*} 
	f_{x}(u|t) & = \frac{1}{\sigma_{1}(t) \sqrt{2\pi}} e^{ - \frac{1}{2} \left( \frac{u - \mu_{1}(t)}{\sigma_{1}(t) } \right)^2 }, \\
	f_{y}(u|t) & = \frac{1}{2} \frac{1}{\sigma_{2}(t) \sqrt{2\pi}} e^{ - \frac{1}{2} \left( \frac{u - \mu_{2}(t)}{\sigma_{2}(t) } \right)^2 } + \frac{1}{2} \frac{1}{\sigma_{2}(t) \sqrt{2\pi}} e^{ - \frac{1}{2} \left( \frac{u + \mu_{2}(t)}{\sigma_{2}(t) } \right)^2 }.
	\end{align*}
	Then A.4 holds if we assume $\sigma_{1}(s), \sigma_{2}(t) $ are bounded from below by a positive constant, $ \mu_1(s), \mu_2(t), \sigma_{1}(s), \sigma_{2}(t)$ are bounded and have bounded second order derivatives. Similar conclusions can be drawn for $G_2$ and $G_3$. Therefore, Condition A.4 is not restrictive as it is customary to assume that the mean and covariance functions for functional data are differentiable. 

\end{remark}
The next assumption specifies the relationship of the number of observations per subject and the decay rate of the  bandwidth parameters $h_x, h_y$. 
\begin{assumption} \label{ass:2new}
 Suppose  $h_x:=h_x(n), h_y:=h_y(m) \rightarrow 0$ and
\begin{align*}
 & \log \left( \frac{n}{ \sum_{i=1}^n N_i^{-1}/n} \right) \frac{\max_{1 \leq i \leq n} N_i^{-1}}{h_x} \frac{\max_{1 \leq i \leq n} N_i^{-1}}{ \sum_{i=1}^n N_i^{-1}/n} \frac{1}{ n}  \rightarrow 0, \\
  & \log \left( \frac{m}{ \sum_{i=n+1}^{n+m} N_i^{-1}/m} \right) \frac{\max_{n+1 \leq i \leq n+m} N_i^{-1}}{h_y} \frac{\max_{n+1 \leq i \leq n+m} N_i^{-1}}{ \sum_{i=n+1}^{n+m} N_i^{-1}/m} \frac{1}{m}  \rightarrow 0.
\end{align*}
\end{assumption}

The following theorem states that we can consistently estimate $\text{MED}(X,Y)$ with sparse observations.

\begin{theorem} \label{thm:1} Under Assumptions \ref{ass:1} and \ref{ass:2new},
	\begin{align*}
	\left| \emph{\text{MED}}_n(\mathbf{Z}) - \emph{\text{MED}}(X,Y) \right| = O_p\left(h_x^2+ \sqrt{ \frac{1}{n^2}  \sum_{i=1}^{n} \phi_i }   + h_y^2 + \sqrt{ \frac{1}{m^2}  \sum_{i=n+1}^{n+m} \phi_i } \right),
	\end{align*}
	where $\{\phi_i: i=1,2, \dots, n+m\}$ are defined as 
	\begin{align*}
	    \phi_i = \left\lbrace
	    \begin{array}{ll}
	      \frac{N_ih_x + N_i(N_i-1)h_x^2}{ N_i^2 h_x^2},   &  1\leq i \leq n , \\
	       \frac{N_i h_y + N_i(N_i-1)h_y^2}{N_i^2 h_y^2},  & n+1 \leq i \leq n+m .
	    \end{array} \right.
	\end{align*}
\end{theorem}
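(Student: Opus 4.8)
The plan is to bound $|\text{MED}_n(\mathbf{Z}) - \text{MED}(X,Y)|$ by the three corresponding bias-plus-variance errors of the diagonal smoothers $\widehat{G}_1(t,t)$, $\widehat{G}_2(t,t)$, $\widehat{G}_3(t,t)$, each of which is a two-dimensional local linear estimator. By the triangle inequality,
\begin{align*}
\left| \text{MED}_n(\mathbf{Z}) - \text{MED}(X,Y) \right| \leq \sum_{k=1}^{3} \int_0^1 \left| \widehat{G}_k(t,t) - G_k(t,t) \right| dt,
\end{align*}
so it suffices to control $\int_0^1 |\widehat{G}_k(t,t) - G_k(t,t)|\,dt$ for each $k$. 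I would focus on $\widehat{G}_1$ (the cross term); $\widehat{G}_2$ and $\widehat{G}_3$ are handled the same way with a $U$-statistic over pairs within one sample rather than a product over the two samples. First I would write the normal equations for the local linear problem \eqref{eq:xy}, invert the $3\times 3$ weighted design matrix, and obtain the usual representation $\widehat{G}_1(t,t) - G_1(t,t) = (\text{bias term}) + (\text{stochastic term})$, where the bias term is $O(h_x^2 + h_y^2)$ uniformly in $t$ by a second-order Taylor expansion of $G_1$, using A.4 to bound the second partials, and the density bounds in A.2 to control the inverse design matrix.

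The stochastic term is the crux. After centering, it is a (generalized) $U$-statistic of the form $\frac{1}{nm}\sum_{i_1 \le n}\sum_{i_2 > n} \frac{1}{N_{i_1}N_{i_2}}\sum_{j_1,j_2} W_{i_1 j_1 i_2 j_2}(t)$ where the summand involves the product of two kernel weights centered at $t$ and the residual $|z_{i_1 j_1} - z_{i_2 j_2}| - G_1(T_{i_1 j_1}, T_{i_2 j_2})$. The key structural point — and the reason the rate matches a one-dimensional smoother despite the two-dimensional target — is that the diagonal evaluation $t_1 = t_2 = t$ together with the independence in A.3 makes the variance decompose by subject: conditioning on the $X$-sample, the randomness in the $Y$-sample averages out at rate governed by $m$ and the $N_i$'s for $i > n$, and symmetrically. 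I would compute $\text{Var}$ by the standard Hoeffding-type decomposition of the $U$-statistic into its first-order (Hájek) projection plus a negligible degenerate remainder; the first-order projection onto the $X$-side contributes a term of order $\frac{1}{n^2}\sum_{i=1}^n \phi_i$ and the projection onto the $Y$-side contributes $\frac{1}{m^2}\sum_{i=n+1}^{n+m}\phi_i$, where the $\phi_i$ arise exactly from splitting $\sum_{j_1,j_2} K_{h}(T_{\cdot j_1}-t)K_{h}(T_{\cdot j_2}-t)$ into the diagonal part ($j_1 = j_2$, giving the $N_i h$ term) and the off-diagonal part ($j_1 \neq j_2$, giving the $N_i(N_i-1)h^2$ term), then dividing by $N_i^2 h^2$ and taking expectations against the design density, which is bounded by A.2 and has finite second moments of $|X(t)|,|Y(t)|$ by A.5. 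Assumption \ref{ass:2new} is what guarantees the bandwidth and the $N_i$'s are balanced so that the inverse-design-matrix approximation and the uniform control over $t$ go through; it is invoked to show a negligible remainder and to justify passing from pointwise to integrated ($\int_0^1 \cdot\, dt$) control via Markov/Fubini after bounding $\sup_t$ or $E\int_0^1$ of the relevant quantities.

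I would then assemble: bias $O_p(h_x^2 + h_y^2)$ from the Taylor step, and stochastic part $O_p\big(\sqrt{n^{-2}\sum_{i\le n}\phi_i} + \sqrt{m^{-2}\sum_{i>n}\phi_i}\big)$ from the variance computation plus Chebyshev, and combine the three $\widehat{G}_k$ contributions, absorbing the within-sample $U$-statistic terms for $G_2, G_3$ into the same $\phi_i$ bookkeeping. The main obstacle I anticipate is the careful accounting of the $U$-statistic variance in the presence of three nested levels of averaging — over subjects, over the within-subject repeated measurements $j$, and over the two independent samples — and in particular verifying that the degenerate (second-order) part of the Hoeffding decomposition is genuinely lower order under Assumption \ref{ass:2new} rather than merely of the same order; this is where the condition's somewhat intricate form ($\log(\cdot)\cdot \frac{\max N_i^{-1}}{h}\cdot\frac{\max N_i^{-1}}{\text{avg }N_i^{-1}}\cdot\frac1n \to 0$) is presumably consumed, likely via a uniform-in-$t$ concentration bound for the denominator design matrix along the lines of the uniform convergence results in \cite{li2010uniform}.
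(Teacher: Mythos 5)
Your proposal follows essentially the same route as the paper: reduce to the three smoothers by the triangle inequality, use the closed-form (normal-equations) representation of the local linear estimator with the denominator controlled uniformly in $t$, extract the $O(h_x^2+h_y^2)$ bias by a second-order Taylor expansion under A.4, and bound the stochastic part by a second-moment computation in which only summands sharing a subject index have nonzero covariance, with the within-subject diagonal ($j_1=j_1'$, giving $N_ih$) versus off-diagonal ($j_1\neq j_1'$, giving $N_i(N_i-1)h^2$) split producing exactly the $\phi_i$. Your Hoeffding/H\'ajek framing is just an organized version of the paper's direct covariance bookkeeping, and the degenerate part you worry about appears there as the product term $\tfrac{1}{n^2m^2}\sum_{i_1}\sum_{i_2}\phi_{i_1}\phi_{i_2}$, which is dominated by the two first-order terms since each is $o(1)$ under Assumption 2.
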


\begin{remark}
For any two-dimensional function $F \in L^2([0,1]^2)$, define the $L^2$-norm as $\|F\|_2 := (\int_{t_1}\int_{t_2} [F(t_1, t_2)]^2 dt_1 dt_2  )^{1/2}$. The above theorem is a consequence of 
\begin{align*}
    \left\| G_I(t_1, t_2) - \widehat{G}_{I}(t_1, t_1) \right\|_2 & = O_p\left(h_x^2+ \sqrt{ \frac{1}{n^2}  \sum_{i=1}^{n} \phi_i }   + h_y^2 + \sqrt{ \frac{1}{m^2}  \sum_{i=n+1}^{n+m} \phi_i } \right),
\end{align*}
where $I=1,2,3$. Compared with the mean function $\mu(t) = E[X(t)]$ and the covariance function $ C_X(s,t) = E[(X(s) - \mu(s))(X(t) - \mu(t))] $, $ G_1, G_2, G_3$ are functions involving two independent stochastic processes. 
An intriguing  phenomenon is that even though $ G_1, G_2, G_3 $ are two-dimensional functions, the convergence rate of their linear smooth estimates is the same as for a one-dimensional function, such as the mean function. 
This is because the expectation $G_1(s,t) = E[|X(s) - Y(t)|]$ involves two independent stochastic processes and  $n \times m$ pairs $\{ (X_{i_1}, Y_{i_2}):i_1=1,\dots, n, i_2=n+1, \dots,n+m  \}$ are used in the linear smoother, leading to a faster convergence rate.  This distinguishes this situation from  the  standard estimation of a bivariate function. For instance, if the goal is to estimate  $E[|X(s) - X(t)|]$, the convergence rate would be slower and would be the same as that for a two-dimensional smoother. We further point out that even though a two dimensional smoothing method is used to estimate $G_I (s, t)$, we only need to evaluate its values at the diagonal where $s=t$. Therefore, the computational effort is manageable. 
\end{remark}

Given two sequences of positive real numbers $a_n$ and $b_n$, we say that $a_n$ and $b_n$ are of the same order as $n \rightarrow \infty$ (denoted as $a_n \asymp b_n$) if there exists constants $0 < c_1 < c_2 < \infty$ such that $ c_1 \leq \lim_{n \rightarrow \infty} a_n/b_n \leq c_2 $ and  $ c_1 \leq \lim_{n \rightarrow \infty} b_n/a_n \leq c_2 $. The convergence rates of $\text{MED}_n(\mathbf{Z})$ for different sampling plans are provided in the following corollary. 


\begin{corollary} \label{cor:1}
Under  \ref{ass:1} and  \ref{ass:2new},  and  further assume  $m(n) \asymp n$.  
\begin{itemize}
    \item[(i)] When  $N_i \asymp C$ for all $i=1,2, \dots, n+m$, where $0 < C <\infty$ is a constant, and   $h_x  \asymp h_y \asymp n^{-1/5}$, we have
    \begin{align*}
	\left| \emph{\text{MED}}_n(\mathbf{Z}) - \emph{\text{MED}}(X,Y) \right| = O_p \left( \frac{1}{n^{2/5}} \right).
	\end{align*}
	
    \item[(ii)] When  $N_i \asymp n^{1/4}$ for all $i=1,2, \dots, n+m$ and    $h_x \asymp h_y \asymp n^{-1/4}$, we have 
    \begin{align*}
	\left| \emph{\text{MED}}_n(\mathbf{Z}) - \emph{\text{MED}}(X,Y) \right| = O_p \left( \frac{1}{\sqrt{n}} \right).
	\end{align*}
\end{itemize}
\end{corollary}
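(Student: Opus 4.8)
The plan is to obtain both rates as immediate consequences of Theorem \ref{thm:1}, by substituting the prescribed orders of $N_i$ and $h_x,h_y$ into the general bound and simplifying the auxiliary quantity $\phi_i$. The first step is to rewrite, for $1\le i\le n$,
\[
\phi_i=\frac{N_ih_x+N_i(N_i-1)h_x^2}{N_i^2h_x^2}=\frac{1}{N_ih_x}+1-\frac{1}{N_i},
\qquad 1\le i\le n,
\]
and analogously $\phi_i=(N_ih_y)^{-1}+1-N_i^{-1}$ for $n+1\le i\le n+m$. This decomposition isolates the only term that can diverge, the per-curve term $(N_ih)^{-1}$, from a remainder that always lies in $[0,1]$.

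For part (i), with $N_i\asymp C$ and $h_x\asymp h_y\asymp n^{-1/5}$, the per-curve term is $(N_ih_x)^{-1}\asymp n^{1/5}\to\infty$ and dominates the bounded remainder, so $\phi_i\asymp n^{1/5}$ uniformly in $i$. Hence $n^{-2}\sum_{i=1}^n\phi_i\asymp n^{-2}\cdot n\cdot n^{1/5}=n^{-4/5}$ and $\sqrt{n^{-2}\sum_{i=1}^n\phi_i}\asymp n^{-2/5}$, which is exactly the order of $h_x^2$. Since $m\asymp n$, the two $Y$-side terms in Theorem \ref{thm:1} are of the same order $n^{-2/5}$, so collecting the four $O_p$ terms gives $O_p(n^{-2/5})$. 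For part (ii), with $N_i\asymp n^{1/4}$ and $h_x\asymp h_y\asymp n^{-1/4}$, the per-curve term becomes $(N_ih_x)^{-1}\asymp n^{-1/4}\cdot n^{1/4}\asymp 1$, so now $\phi_i\asymp 1$ uniformly, whence $n^{-2}\sum_{i=1}^n\phi_i\asymp n^{-1}$ and $\sqrt{n^{-2}\sum_{i=1}^n\phi_i}\asymp n^{-1/2}$, matching $h_x^2\asymp n^{-1/2}$; again $m\asymp n$ makes the $Y$-side contribute the same order, and the total is $O_p(n^{-1/2})$.

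The one step I would not leave implicit — and the closest thing to an obstacle here — is verifying that the prescribed triples $(N_i,h_x,h_y)$ are admissible under Assumption \ref{ass:2new}, since otherwise the corollary's hypotheses are unsatisfiable and the statement is vacuous. In both regimes the argument of the logarithm, $n\big/\!\big(n^{-1}\sum_{i=1}^nN_i^{-1}\big)$, is polynomial in $n$ (it is $\asymp n$ in (i) and $\asymp n^{5/4}$ in (ii)), so the log factor contributes only $\log n$; the ratio $\max_iN_i^{-1}\big/\!\big(n^{-1}\sum_{i=1}^nN_i^{-1}\big)$ is $\asymp 1$ in both cases; and $\max_iN_i^{-1}/h_x$ is $\asymp n^{1/5}$ in (i) and $\asymp 1$ in (ii). Multiplying by the trailing $1/n$ yields $\asymp n^{-4/5}\log n\to 0$ in (i) and $\asymp n^{-1}\log n\to 0$ in (ii), with the symmetric conclusion on the $Y$-side using $m\asymp n$, so Assumption \ref{ass:2new} holds and Theorem \ref{thm:1} applies. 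What remains is only the bookkeeping of summing four $O_p$ terms already shown to be of the claimed order.
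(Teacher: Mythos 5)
Your proposal is correct and is essentially the intended argument: the corollary follows by direct substitution of the prescribed orders of $N_i$, $h_x$, $h_y$ into the bound of Theorem \ref{thm:1}, using the simplification $\phi_i=(N_ih)^{-1}+1-N_i^{-1}$, and your rate computations ($\phi_i\asymp n^{1/5}$ giving $n^{-2/5}$ in (i), $\phi_i\asymp 1$ giving $n^{-1/2}$ in (ii)) all check out. The added verification that Assumption \ref{ass:2new} is satisfied in both regimes is a sound non-vacuousness check, though strictly speaking it is part of the corollary's hypotheses rather than something the proof must establish.
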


\subsection{Validity of the Permutation Test and Power Analysis}
We now justify the permutation based test for sparsely observed functional data. Under the null hypothesis and the  mild assumption that $\{N_i\}$ are i.i.d across subjects, the size of the test can be guaranteed by the fact that the distribution of the sample is invariant under permutation. For a rigorous argument, see Theorem 15.2.1 in \cite{lehmann2006testing}. Thus, the permutation test based on the test statistic (\ref{eq:MED}) produces a legitimate size of the test. The power analysis is much more challenging and will be presented below.


Let $\pi \in \mathbb{P}_{n+m}$ be a fixed permutation and $\widehat{G}_{\pi, I}(t_1, t_2), I=1,2,3$ be the estimated functions from algorithms \eqref{eq:xy} and \eqref{eq:xx} using permuted samples $\pi \cdot \mathbf{Z}$. The  conditions  on the decay rate of bandwidth parameters $h_x, h_y$ that ensure the convergence of $\widehat{G}_{\pi, I}$ for any fixed permutation $\pi$ are summarized below. 



\begin{assumption} \label{ass:power} 
 Suppose  $h_x:=h_x(n), h_y:=h_y(m) \rightarrow 0$ and
\begin{align*}
 & \sup\limits_{\pi \in \mathbb{P}_{n+m}} \log \left( \frac{n^2}{ \sum_{i=1}^n N_{\pi(i)}^{-1}} \right) \frac{\max\limits_{1 \leq i \leq n} N_{\pi(i)}^{-1}}{\min\{ h_x, h_y \} } \frac{\max\limits_{1 \leq i \leq n} N_{\pi(i)}^{-1}}{ \sum_{i=1}^n N_{\pi(i)}^{-1}/n} \frac{1}{ n}  \rightarrow 0, \\
  &\sup\limits_{\pi \in \mathbb{P}_{n+m}} \log \left( \frac{m^2}{ \sum_{i=n+1}^{n+m} N_{\pi (i)}^{-1}} \right) \frac{\max\limits_{n+1 \leq i \leq n+m} N_{\pi(i)}^{-1}}{\min\{h_x, h_y\} } \frac{\max\limits_{n+1 \leq i \leq n+m} N_{\pi(i)}^{-1}}{ \sum_{i=n+1}^{n+m} N_{\pi(i)}^{-1}/m} \frac{1}{m}  \rightarrow 0.
\end{align*}
\end{assumption}
Let $\Pi$ be a random permutation uniformly sampled from $\mathbb{P}_{n+m}$. If the sample is randomly shuffled, it holds that $Z_{\Pi(i)} =^d Z_{\Pi(j)}$ and $\text{MED}(Z_{\Pi(i)} , Z_{\Pi(j)} ) = 0$ for any $i,j=1,2, \dots, n+m$. For the sample estimate $\text{MED}_n(\Pi \cdot \mathbf{Z})$ based on the permuted sparse observations, we show that $\text{MED}_n(\Pi \cdot \mathbf{Z})$ converges to 0 in probability.

\begin{theorem} \label{thm:per}
Under Assumptions \ref{ass:1} and \ref{ass:power}, 
\begin{align*}
    \left|\emph{\text{MED}}_n( \Pi \cdot \mathbf Z) \right| =O_p \left(  \sup\limits_{\pi \in \mathbb{P}_{n+m}} \sqrt{ \frac{1}{n^2}  \sum_{i=1}^{n} \phi_{\pi(i)} }   + \sup\limits_{\pi \in \mathbb{P}_{n+m}} \sqrt{ \frac{1}{m^2}  \sum_{i=n+1}^{n+m} \phi_{\pi(i)} }  \right),
\end{align*}
where $ \Pi \sim \text{Uniform}(\mathbb P_{n+m})$ and is independent of the data.
\end{theorem}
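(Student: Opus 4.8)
The target rate carries no $h_x^2+h_y^2$ term, so the plan has two parts: (i) rerun the argument behind Theorem~\ref{thm:1} on the \emph{permuted} data, obtaining the stochastic part uniformly over all $\pi\in\mathbb{P}_{n+m}$, and (ii) show that the ``population'' part of $\text{MED}_n(\Pi\cdot\mathbf Z)$, together with the residual smoothing bias, is of smaller order because a uniformly random relabeling splits the pooled sample almost evenly. I would fix a permutation $\pi$ and let $G_{\pi,I}(t_1,t_2)$, $I=1,2,3$, denote the population functions that $\widehat G_{\pi,I}$ estimate. After permutation each group carries a mixture of the two sampling designs, so $G_{\pi,I}$ is a design-weighted combination of the four pairwise functions $g_{ab}(t_1,t_2):=E|W_a(t_1)-W_b(t_2)|$, with $W_x:=X$, $W_y:=Y$, $a,b\in\{x,y\}$, the weights depending on $\pi$ only through $k(\pi):=\#\{i\le n:\pi(i)\le n\}$, the number of original $X$-subjects sent to the first permuted group. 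With $\text{med}(\pi):=\int_0^1[2G_{\pi,1}(t,t)-G_{\pi,2}(t,t)-G_{\pi,3}(t,t)]\,dt$ I would decompose
\begin{align*}
\text{MED}_n(\Pi\cdot\mathbf Z)=\bigl[\text{MED}_n(\Pi\cdot\mathbf Z)-\text{med}(\Pi)-b_\Pi\bigr]+b_\Pi+\text{med}(\Pi),
\end{align*}
where $b_\pi$ is the two-dimensional local-linear smoothing bias of $\int_0^1[2\widehat G_{\pi,1}(t,t)-\widehat G_{\pi,2}(t,t)-\widehat G_{\pi,3}(t,t)]\,dt$ relative to $\text{med}(\pi)$, and bound the three pieces separately.

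For the bracketed estimation-error term I would repeat the proof of Theorem~\ref{thm:1} for the permuted sample, the point being that every constant there depends only on the kernel, the density bounds $c\le g_x,g_y\le C$ of Assumption~A.2 (which still bound the mixed designs of the permuted groups), the second-derivative bounds of Assumption~A.4, and the moments of Assumption~A.5 --- none of which involve $\pi$ --- while the decay conditions needed on $h_x,h_y$ are exactly the $\sup_{\pi}$-uniform ones of Assumption~\ref{ass:power}. This gives, for each fixed $\pi$ and with a $\pi$-free $O_p$-constant, $|\text{MED}_n(\pi\cdot\mathbf Z)-\text{med}(\pi)-b_\pi|=O_p(r_n(\pi))$, where $r_n(\pi):=\sqrt{n^{-2}\sum_{i=1}^n\phi_{\pi(i)}}+\sqrt{m^{-2}\sum_{i=n+1}^{n+m}\phi_{\pi(i)}}$. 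Since $\Pi$ is independent of the data and uniform on $\mathbb{P}_{n+m}$, conditioning on $\Pi$ and taking the supremum over $\pi$ upgrades this to $|\text{MED}_n(\Pi\cdot\mathbf Z)-\text{med}(\Pi)-b_\Pi|=O_p\bigl(\sup_{\pi}r_n(\pi)\bigr)$.

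The crux will be showing $\text{med}(\Pi)$ and $b_\Pi$ are negligible. Encoding each group's design-weighted type composition at time $t$ by probability vectors $\mathbf{q}_A(t),\mathbf{q}_B(t)\in\{(q,1-q):q\in[0,1]\}$ and writing $\widetilde M(t_1,t_2)=(g_{ab}(t_1,t_2))_{a,b\in\{x,y\}}$, a direct computation should collapse the three-term combination into a single quadratic form:
\begin{align*}
2G_{\pi,1}(t_1,t_2)-G_{\pi,2}(t_1,t_2)-G_{\pi,3}(t_1,t_2)=-\bigl(\mathbf{q}_A(t_1)-\mathbf{q}_B(t_1)\bigr)^{T}\widetilde M(t_1,t_2)\bigl(\mathbf{q}_A(t_2)-\mathbf{q}_B(t_2)\bigr)+O(1/n),
\end{align*}
the $O(1/n)$ stemming from the difference between $\binom{\cdot}{2}$ and the corresponding products in the within-group sums. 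The key feature is that this has \emph{no} term linear in the split imbalance: since $\mathbf{q}_A(t)-\mathbf{q}_B(t)$ is a scalar multiple of $(1,-1)$ with multiplier $O\bigl((k(\pi)-n^2/(n+m))/n\bigr)$ by boundedness of the densities, the right-hand side is $O\bigl((k(\pi)-n^2/(n+m))^2/n^2\bigr)$ times a function with bounded second partials (Assumption~A.4), plus $O(1/n)$. Under a uniform $\Pi$, $k(\Pi)$ is hypergeometric with mean $n^2/(n+m)$ and variance $O(n)$, so $k(\Pi)-n^2/(n+m)=O_p(\sqrt n)$ by Chebyshev; hence $\text{med}(\Pi)=O_p(1/n)$, and since $b_\pi$ is, to leading order, $\tfrac{\mu_2(K)}{2}\int_0^1[h_x^2\partial_{11}+h_y^2\partial_{22}](2G_{\pi,1}-G_{\pi,2}-G_{\pi,3})(t,t)\,dt$, also $b_\Pi=O_p((h_x^2+h_y^2)/n)$.

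Finally, since $\phi_i\ge 1-N_i^{-1}$ and $\phi_i\ge (N_ih_x)^{-1}$ for $i\le n$ (similarly for $i>n$), one has $\sup_{\pi}r_n(\pi)\gtrsim n^{-1/2}$, so both $O_p(1/n)$ and $O_p((h_x^2+h_y^2)/n)$ are $o_p\bigl(\sup_{\pi}r_n(\pi)\bigr)$; summing the three pieces of the decomposition then yields $|\text{MED}_n(\Pi\cdot\mathbf Z)|=O_p\bigl(\sup_{\pi}r_n(\pi)\bigr)$, as claimed. I expect the main obstacle to be the collapse in the previous paragraph --- verifying the quadratic-form identity and, in particular, that the contributions linear in the split imbalance (including the density reweighting produced by the mixed permuted designs) all cancel, leaving only a term quadratic in the imbalance --- with a secondary technical point being to confirm that the constants in the Theorem~\ref{thm:1}-style analysis really are uniform over $\pi$ once the two groups carry mixed designs.
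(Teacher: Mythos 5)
Your overall architecture --- an estimation-error bound uniform in $\pi$, plus a separate argument that the permuted ``population'' value and the smoothing bias are negligible --- can be made to work, but it is a genuinely different and substantially heavier route than the paper's. The paper never introduces a fixed-$\pi$ population target $\mathrm{med}(\pi)$ at all. Instead it exploits exchangeability directly: writing $\widehat G_{\Pi,I}$ as a ratio built from the kernel-weighted sums $U_{\Pi,I}^{p_1p_2}$ and $V_{\Pi,I}^{p_1p_2}$, it observes that because $\Pi$ is uniform and independent of the data, $E[\mathcal Z_{\Pi(i_1)\Pi(i_2)}]$ and $E[\mathcal T_{\Pi(i_1)\Pi(i_2)}]$ are the same for every ordered pair $i_1\neq i_2$, so all three estimators $\widehat G_{\Pi,1},\widehat G_{\Pi,2},\widehat G_{\Pi,3}$ share the identical target $\widetilde G(t_1,t_2)/\widetilde T(t_1,t_2)$; the combination $2-1-1$ annihilates this common limit exactly, and only the variance terms (controlled uniformly over $\pi$, as in your first step) survive. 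No Taylor expansion, no bias term, and no analysis of the mixed post-permutation designs is needed --- which is precisely why $h_x^2+h_y^2$ is absent from the stated rate.

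Two concrete problems with your version. First, your fixed-$\pi$ targets $G_{\pi,I}$ and the bias $b_\pi$ are design-density-weighted mixtures: after permutation each group's local-linear target at $(t_1,t_2)$ is of the form $\sum_{a,b}w_{ab}\,g_a(t_1)g_b(t_2)g_{ab}(t_1,t_2)\big/\sum_{a,b}w_{ab}\,g_a(t_1)g_b(t_2)$, so an $h^2$ bias expansion requires two derivatives of the design densities $g_x,g_y$, whereas Assumption A.2 only makes them bounded above and below; your step (ii) therefore needs hypotheses the theorem does not grant. Second, the claimed order $b_\Pi=O_p((h_x^2+h_y^2)/n)$ rests on the quadratic-form cancellation persisting after differentiation, but it does not: the second-derivative combination produces terms like $(q_A''-q_B'')^{T}\widetilde M\,q_B$ that are only \emph{linear} in the split imbalance, giving $b_\Pi=O_p((h_x^2+h_y^2)/\sqrt n)$. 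This is still $o_p\bigl(\sup_\pi r_n(\pi)\bigr)$ since $h_x,h_y\to0$ and $\sup_\pi r_n(\pi)\gtrsim n^{-1/2}+m^{-1/2}$, so your conclusion survives, but the stated cancellation is wrong as written. Both issues disappear if you replace the fixed-$\pi$ population analysis by the paper's exchangeability argument.
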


For the original data that have not been permuted, $ \text{MED}_n(\mathbf{Z}) \rightarrow^p \text{MED}(X , Y )$, which is strictly positive under the alternative hypothesis. On the other hand, we know from Theorem \ref{thm:per} that the permuted statistics converges to 0 in probability. This suggests that,  under mild assumptions, the probability of rejecting the null approaches 1 as $n,m \rightarrow \infty$.
We make this idea rigorous in the following theorem.

\begin{theorem} \label{thm:power}
Under Assumption \ref{ass:1} and \ref{ass:power}, for any fixed $S > 1/ \alpha$ we have 
\begin{align*}
P_{H_A} \left( \widehat{p} \leq \alpha \right) \rightarrow 1.
\end{align*}
\end{theorem}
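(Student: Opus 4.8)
\textbf{Proof plan for Theorem \ref{thm:power}.}

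The plan is to show that, under $H_A$, the observed statistic $\text{MED}_n(\mathbf{Z})$ stays bounded away from zero while the permutation threshold collapses to zero, so that eventually $\widehat p \le \alpha$ with probability tending to one. First I would fix the target: under $H_A$ we have $\text{MED}(X,Y) =: \delta > 0$, and by Theorem \ref{thm:1} together with Assumptions \ref{ass:1}--\ref{ass:2new} (note Assumption \ref{ass:power} implies \ref{ass:2new}), $\text{MED}_n(\mathbf{Z}) \to^p \delta$. Hence for any $\eta>0$, with probability tending to one, $\text{MED}_n(\mathbf{Z}) > \delta/2$. Second, I would control the permuted statistics: by Theorem \ref{thm:per}, for a single uniformly random permutation $\Pi$, $|\text{MED}_n(\Pi\cdot\mathbf{Z})| = O_p(r_n)$ where $r_n := \sup_{\pi}\sqrt{n^{-2}\sum_{i=1}^n \phi_{\pi(i)}} + \sup_\pi\sqrt{m^{-2}\sum_{i=n+1}^{n+m}\phi_{\pi(i)}} \to 0$ under Assumption \ref{ass:power}. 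Since $S$ is fixed and the auxiliary permutations $\Pi_1,\dots,\Pi_{S-1}$ are i.i.d.\ uniform and independent of the data, a union bound over these finitely many permutations gives $\max_{1\le l\le S-1} |\text{MED}_n(\Pi_l\cdot\mathbf{Z})| = O_p(r_n)$, so this maximum is also $o_p(1)$ and in particular is eventually below $\delta/2$ with probability tending to one.

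The third step assembles these pieces. On the event $\{\text{MED}_n(\mathbf{Z}) > \delta/2\} \cap \{\max_{1\le l\le S-1}\text{MED}_n(\Pi_l\cdot\mathbf{Z}) < \delta/2\}$, every indicator $\mathbb{I}_{\{\text{MED}_n(\Pi_l\cdot\mathbf{Z}) \ge \text{MED}_n(\mathbf{Z})\}}$ is zero, so
\begin{align*}
\widehat p = \frac{1}{S}\left\{1 + \sum_{l=1}^{S-1}\mathbb{I}_{\{\text{MED}_n(\Pi_l\cdot\mathbf{Z})\ge \text{MED}_n(\mathbf{Z})\}}\right\} = \frac1S < \alpha,
\end{align*}
where the last inequality is exactly the hypothesis $S > 1/\alpha$. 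Since the probability of that event tends to one, $P_{H_A}(\widehat p \le \alpha) \to 1$, which is the claim.

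\textbf{Main obstacle.} The genuinely nontrivial input is Theorem \ref{thm:per} (already granted here), i.e.\ that a random permutation mixes the two samples well enough that $\text{MED}_n(\Pi\cdot\mathbf{Z})$ vanishes in probability despite the two samples having different sampling densities $g_x, g_y$ and different bandwidths $h_x, h_y$; this is why Assumption \ref{ass:power} must control $\sup_\pi$ over all permutations of the $N_i$ configuration rather than a single design. Given that, the remaining concern is purely bookkeeping: because $S$ is \emph{fixed} (not growing), one cannot appeal to a law of large numbers over permutations, but one does not need to — a crude union bound over $S-1$ terms suffices, and the strict inequality $1/S < \alpha$ does the rest. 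One should also note that $O_p(r_n)$ control for each single $\Pi_l$ transfers to the finite maximum without any independence beyond what is stated, and that the event manipulation needs only $\text{MED}_n(\mathbf{Z})$ to exceed the permuted maximum, not any rate comparison.
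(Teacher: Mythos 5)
Your proposal is correct and follows essentially the same route as the paper's own proof: both establish that $\text{MED}_n(\mathbf{Z})\to^p \text{MED}(X,Y)>0$ via Theorem \ref{thm:1} (with Assumption \ref{ass:power} implying Assumption \ref{ass:2new}), use Theorem \ref{thm:per} plus a union bound over the fixed $S-1$ permutations to push every permuted statistic below $\text{MED}(X,Y)/2$, and conclude $\widehat p=1/S<\alpha$ on the intersection event. No gaps.
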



	


\begin{remark}
Since Assumption  3 implies Assumption 2, Theorem 1 holds under the assumption of Theorem \ref{thm:power} and it facilitates the proof of Theorem 3.
\end{remark}

\subsection{Handling  of Measurement Errors} \label{sec:correct}

With measurement errors present,  the actual observed data are: 
\begin{align*}
\widetilde{x}_{ij} &= X_i(T_{ij})+  e_{ij},  \; i=1,2, \dots, n, \; j = 1,2, \dots, N_i , \\
\widetilde{y}_{ij} &= Y_{i}(T_{ij})+e_{ij},  \; i = n+ 1, \dots, n+ m, \;  j = 1,2, \dots, N_i.
\end{align*}
where $ \{ e_{ij} : i=1,2 , \dots, n, j=1,2, \dots, N_i \} \sim^{i.i.d} e_1 $,  $ \{e_{ij} : i=n+1, \dots, n+m, j=1,2, \dots, N_i \} \sim^{i.i.d} e_2 $, and $e_1, e_2$ are mean 0 independent univariate random variables.  
Denote the combined noisy observations by  $ \widetilde{\mathbf{Z}}  = (\widetilde{\mathbf{z}}_1 ; \dots ; \widetilde{\mathbf{z}}_{n+m}) $, where
\begin{align*}
\widetilde{\mathbf{z}}_i= (\widetilde{z}_{i1}, \widetilde{z}_{i2}, \dots, \widetilde{z}_{i,N_i})^T = \left\lbrace
    \begin{array}{ll}
    (\widetilde{x}_{i1}, \widetilde{x}_{i2}, \dots , \widetilde{x}_{i,N_i})^T,   & \text{ if } 1 \leq i \leq n,  \\
    (\widetilde{y}_{i1}, \widetilde{y}_{i2}, \dots , \widetilde{y}_{i,N_i})^T,   & \text{ if } n+ 1 \leq i \leq n +m.
    \end{array} \right.
\end{align*}
The local linear smoothers described in Equations \eqref{eq:xy} and \eqref{eq:xx} are then applied with the input data $\{ z_{i_1j_1} \}$ and $\{ z_{i_2j_2} \}$ replaced, respectively, by $\{ \widetilde{z}_{i_1j_1} \}$ and $\{ \widetilde{z}_{i_2j_2} \}$. The resulting outputs  are denoted as $ \widehat{H}_{1}, \widehat{H}_{2}, \widehat{H}_{3}$, leading to the estimator  
\begin{align} \label{eq:medZtilde}
\text{MED}_n(\widetilde{\mathbf{Z}}) = \int 2 \widehat{H}_{1}(t,t) - \widehat{H}_{2}(t,t) - \widehat{H}_{3}(t,t) dt.
\end{align}
Correspondingly, the proposed test with contaminated data $\widetilde{\mathbf{Z}}$ is:
	$$
	\text{Reject } H_0,  \text{ if } \widetilde{p} \leq \alpha. 	
	$$
	where $ \widetilde{p} = \frac{1}{S} \big\{ 1 + \sum_{l=1}^{S-1}  \mathbb{I}_{ \{ \text{MED}_n( \Pi_l \cdot \widetilde{\mathbf{Z}}) \geq \text{MED}_n(  \widetilde{\mathbf{Z}})  \} } \big\}. $ To study the convergence of $\text{MED}_n(\widetilde{\mathbf{Z}})$, define the two-dimensional functions $ H_{1}(s,t) $, $H_{2}(s,t) $ and $ H_{3}(s,t) $  as
\begin{align} \label{tildeG}
\begin{split}
    & H_{1}(s,t) =  E[ |X(s) + e_{1} - Y(t) - e_2 | ] , \\
    &  H_{2}(s,t) = E[ | X(s) + e_1 - X'(t) - e_{1}' | ], \\
    & H_{3}(s,t) = E[ |Y(s) + e_2 - Y'(t) - e_{2}' | ],
\end{split}
\end{align}
where $e_{1}'$ and $e_{2}'$ are independent and identical copies of $e_1$ and $e_2$ respectively. The target of $\text{MED}_n(\widetilde{\mathbf{Z}})$ is shown  to be 
\begin{align}
\widetilde{\text{MED}}(X,Y) = \int 2 H_{1}(t,t) - H_{2}(t,t) - H_{3}(t,t) dt,
\end{align}

\begin{remark}
An unpleasant fact is that $\widetilde{\emph{\text{MED}}}(X, Y)$
 involves errors, which cannot be easily removed due to the presence of the absolute error function in (\ref{tildeG}).
The handling of measurement errors in both method and theory is thus very different here from  conventional approaches for functional data, where one does not deal with the   absolute function. The energy distance with $L^2$-norm in Equation \eqref{eq:2step} also has this issue. Thus, measurement errors would also be a challenge for the full homogeneity test even if we can approximate the $L^2$ norm in the energy distance well.
\end{remark}
To show the approximation error of $\text{MED}_n(\widetilde{\mathbf{Z}})$ we  need the following assumptions.
\begin{assumption} \label{ass:2}  
	\begin{itemize}
	    \item[D.1] $ E[e_1^2] < \infty  $ and $E[e_2^2] < \infty$.
		\item[D.2] $\{ X_{i_1}, Y_{i_2}, T_{ij}, e_{ij} : 1 \leq i_1 \leq n, n+1 \leq i_2 \leq n+m, 1 \leq i \leq n+m, 1 \leq j \leq N_i \}$ are mutually independent.
		\item[D.3] \label{ass:a4} The second order partial derivatives of $H_{1}, H_{2}, H_{3}$ are bounded on $[0,1]$.
	\end{itemize}
\end{assumption}
\begin{remark}
Using the notations $f_x(\cdot|t)$ and $f_y(\cdot|t)$ in Remark \ref{eq:rmk}, let the density functions of $e_1$ and $e_2$ be $\eta_{1}( \cdot )$ and $\eta_{2}(\cdot )$ respectively. Under the conditions of the Leibniz integral rule,
\begin{align*}
\frac{\partial^2}{\partial s \partial t} H_{1}(s,t) = \int |u-v + a-b| \frac{\partial}{\partial s} f_{x}(u|s) \frac{\partial}{ \partial t} f_{y}(v|t) \eta_{1}(a) \eta_{2}(b)dudvdadb,
\end{align*}
which admits bounded second-order partial derivatives if \eqref{eq:partial} holds. Similar conclusions can be drawn for $ H_2 $ and $H_3$. Therefore, Assumption D.3 is mild.
\end{remark}

\begin{corollary}
 \label{cor:2} Under Assumptions \ref{ass:1}, \ref{ass:2new} and \ref{ass:2}, we have
	\begin{align*}
	\left| \emph{\text{MED}}_n(\widetilde{\mathbf{Z}}) - \widetilde{\emph{\text{MED}}}(X,Y) \right| = O_p\left(h_x^2+ \sqrt{ \frac{1}{n^2}  \sum_{i=1}^{n} \phi_i }   + h_y^2 + \sqrt{ \frac{1}{m^2}  \sum_{i=n+1}^{n+m} \phi_i } \right).
	\end{align*}
\end{corollary}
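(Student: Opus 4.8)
The plan is to reduce Corollary~\ref{cor:2} to Theorem~\ref{thm:1}: the contaminated problem has exactly the same structure as the noiseless one, with the three bivariate targets $G_1,G_2,G_3$ replaced by $H_1,H_2,H_3$ and the clean responses $|z_{i_1j_1}-z_{i_2j_2}|$ replaced by the noisy responses $|\widetilde z_{i_1j_1}-\widetilde z_{i_2j_2}|$. First I would isolate from the proof of Theorem~\ref{thm:1} the purely ``smoother'' statement: a local linear fit of the form \eqref{eq:xy}--\eqref{eq:xx} built from responses $W$ attached to design points $(T_{i_1j_1},T_{i_2j_2})$ estimates its conditional-mean surface $\Gamma(s,t)=E[W\mid T_{i_1j_1}=s,\,T_{i_2j_2}=t]$ in $L^2([0,1]^2)$ at the rate $O_p(h_x^2+\sqrt{n^{-2}\sum_i\phi_i}+h_y^2+\sqrt{m^{-2}\sum_i\phi_i})$, provided the kernel/design conditions A.1--A.2 hold, $\Gamma$ has bounded second-order partials, $\sup_{s,t}E[W^2]<\infty$, the appropriate independence structure holds, and the bandwidth/sparsity balance of Assumption~\ref{ass:2new} holds. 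Theorem~\ref{thm:1} is the instance $W=|z_{i_1j_1}-z_{i_2j_2}|$, $\Gamma=G_I$; Corollary~\ref{cor:2} is the instance $W=|\widetilde z_{i_1j_1}-\widetilde z_{i_2j_2}|$, $\Gamma=H_I$, and it remains to check the hypotheses in this instance.

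The smoothness hypothesis is precisely D.3 (bounded second-order partials of $H_1,H_2,H_3$), which delivers the bias term $O(h_x^2+h_y^2)$. The second-moment hypothesis follows from $|\widetilde z_{i_1j_1}-\widetilde z_{i_2j_2}|\le |X_{i_1}(T_{i_1j_1})|+|Y_{i_2}(T_{i_2j_2})|+|e_{i_1j_1}|+|e_{i_2j_2}|$ together with A.5 and D.1, so $\sup_{s,t}E[W^2]\le 4(\sup_t E|X(t)|^2+\sup_t E|Y(t)|^2+E[e_1^2]+E[e_2^2])<\infty$; hence the variance contributions, and thus the $\phi_i$'s, retain the same orders as in the noiseless case. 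Independence is D.2, and the bandwidth/sparsity balance is Assumption~\ref{ass:2new}, both assumed in the corollary. The one substantive point is the conditional-mean identification: for a summand of the cross-sample smoother \eqref{eq:xy} the indices satisfy $i_1\le n<i_2$, so $e_{i_1j_1}$ and $e_{i_2j_2}$ come from different subjects, are independent with laws $e_1$ and $e_2$, and by D.2 are independent of $(X_{i_1},Y_{i_2})$ and of all design points; therefore $E[\,|\widetilde z_{i_1j_1}-\widetilde z_{i_2j_2}|\mid T_{i_1j_1}=s,T_{i_2j_2}=t\,]=E|X(s)+e_1-Y(t)-e_2|=H_1(s,t)$, matching \eqref{tildeG}. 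For \eqref{eq:xx} the double sum ranges only over distinct subjects $i_1<i_2$, so the two noise terms are independent copies of $e_1$ (respectively $e_2$), and the conditional mean equals $H_2$ (respectively $H_3$), again as in \eqref{tildeG}.

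Given the smoother estimate for each $I$, I would conclude exactly as in Theorem~\ref{thm:1}: write $\text{MED}_n(\widetilde{\mathbf Z})-\widetilde{\text{MED}}(X,Y)=\int_0^1[\,2(\widehat H_1-H_1)-(\widehat H_2-H_2)-(\widehat H_3-H_3)\,](t,t)\,dt$, bound its absolute value by $\sum_I c_I\int_0^1|\widehat H_I(t,t)-H_I(t,t)|\,dt\le \sum_I c_I\big(\int_0^1|\widehat H_I(t,t)-H_I(t,t)|^2\,dt\big)^{1/2}$ with $c_1=2,\ c_2=c_3=1$, and invoke the diagonal $L^2$ bound from the previous step. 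The main obstacle, and really the only place where the argument is not a verbatim transcription of the proof of Theorem~\ref{thm:1} with $G_I$ turned into $H_I$ and $z$ into $\widetilde z$, is this conditional-mean identification: one must use that \eqref{eq:xx} excludes within-subject pairs (so the noise terms in each summand are genuinely two independent copies rather than the same noise variable) and that D.2 fully decouples the noise from the processes and the sampling times; everything else is preserved because the orders of the bias and variance depend on the responses only through their finite second moments and on the targets only through their bounded second-order derivatives.
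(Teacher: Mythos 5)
Your proposal is correct and takes essentially the same route as the paper, which proves Corollary \ref{cor:2} by repeating the proof of Theorem \ref{thm:1} verbatim with $z_{i_1j_1}$, $G_I$, $\widehat{G}_I$ replaced by $\widetilde{z}_{i_1j_1}$, $H_I$, $\widehat{H}_I$. Your write-up simply makes explicit the hypotheses that need rechecking under contamination — the conditional-mean identification $E[\,|\widetilde z_{i_1j_1}-\widetilde z_{i_2j_2}|\mid T_{i_1j_1},T_{i_2j_2}]=H_I$ (valid because the smoothers only use cross-subject pairs and D.2 decouples the errors), the second-moment bound via A.5 and D.1, and the bias control via D.3 — all of which the paper leaves implicit.
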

 By the property of energy distance, it holds that $ \widetilde{\text{MED}}(X, Y) \geq 0 $ and $ \widetilde{\text{MED}} (X, Y ) = 0 \Leftrightarrow X(t) + e_1 =^d Y(t) + e_2$ for almost all $t \in [0,1]$. Then, we show that under the following assumptions, the condition that $ X(t) + e_1 =^d Y(t) + e_2 $ would imply the homogeneity of $X(t)$ and $Y(t)$.
 
\begin{assumption} \label{ass:3} Suppose that for any $t \in [0,1]$
\begin{itemize}
    \item[E.1] $X(t)$, $Y(t)$ are continuous random variables with density functions $f_{x}(\cdot|t)$, $f_{y}(\cdot|t)$ respectively.
    \item[E.2] $e_1 $, $e_2$ are i.i.d continuous random variables with characteristic function $\phi(\cdot)$ and the real zeros of $\phi(\cdot)$ has Lebesgue measure 0.
    \item[E.3] $ \{ X(t), Y(t), e_1, e_2 \} $ are mutually independent.
\end{itemize}
\end{assumption}

For common distributions, such as Gaussian and Cauchy, their characteristic functions are of exponential form and have no real zeros. Some other random variables, such as Exponential, Chi-square and Gamma, have characteristic functions of the form $\psi(t) = (1-it \theta)^{-k}$ with only a  finite number of real zeros. Since it is common to assume  Gaussian measurement errors, the restriction on the real zeros of the characteristic function in Assumption \ref{ass:3} (E.2) is  very mild.

\begin{theorem} \label{thm:2}
Under Assumption \ref{ass:3}, for any $t \in [0,1]$,
\begin{align*}
& X(t) + e_1 = ^d Y(t) + e_2 \Leftrightarrow X(t) =^d Y(t).
\end{align*}
\end{theorem}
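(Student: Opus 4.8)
The plan is to pass to characteristic functions and cancel the common factor contributed by the noise, using the hypothesis on the zero set of $\phi$ only to guarantee density. The direction $(\Leftarrow)$ is immediate: if $X(t) =^d Y(t)$, then since $e_1 =^d e_2$ by E.2 and $e_1, e_2$ are independent of $X(t), Y(t)$ by E.3, the sums $X(t)+e_1$ and $Y(t)+e_2$ are both distributed as the convolution of the law of $X(t)$ (equivalently $Y(t)$) with the common noise law, hence equal in distribution. So the substance is in $(\Rightarrow)$.

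For $(\Rightarrow)$, fix $t \in [0,1]$ and set $\varphi_X(u) = E[e^{iuX(t)}]$ and $\varphi_Y(u) = E[e^{iuY(t)}]$. By E.3 and the independence of the noise, the characteristic functions of $X(t)+e_1$ and $Y(t)+e_2$ factor as $\varphi_X(u)\phi(u)$ and $\varphi_Y(u)\phi(u)$ respectively. The assumption $X(t)+e_1 =^d Y(t)+e_2$ then yields
$$\bigl(\varphi_X(u) - \varphi_Y(u)\bigr)\,\phi(u) = 0 \qquad \text{for all } u \in \mathbb{R}.$$
Writing $\mathcal{Z} = \{u \in \mathbb{R} : \phi(u) = 0\}$, on $\mathbb{R}\setminus\mathcal{Z}$ we may divide by $\phi(u)$ to get $\varphi_X(u) = \varphi_Y(u)$.

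The final step is to upgrade this to equality on all of $\mathbb{R}$. By E.2, $\lambda(\mathcal{Z}) = 0$, so $\mathbb{R}\setminus\mathcal{Z}$ is dense in $\mathbb{R}$; since $\varphi_X$ and $\varphi_Y$ are continuous (every characteristic function is), agreement on a dense set forces agreement everywhere, and the uniqueness theorem for characteristic functions gives $X(t) =^d Y(t)$. As $t$ was arbitrary, the equivalence follows. I do not anticipate a genuine obstacle here; the only point that needs care is the last step — one must not stop at "$\varphi_X = \varphi_Y$ Lebesgue-almost everywhere" but invoke continuity plus density of $\mathbb{R}\setminus\mathcal{Z}$ to conclude pointwise equality. (Note that E.1 is not actually needed for this particular statement; it is used elsewhere.)
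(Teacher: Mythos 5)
Your proof is correct, but it takes a genuinely different route from the paper's. The paper works on the level of \emph{densities}: writing the convolution identity $\int f_x(u|t)\eta_1(a-u)\,du=\int f_y(u|t)\eta_2(a-u)\,du$ for all $a$, it observes that $f_x(\cdot|t)-f_y(\cdot|t)$ is orthogonal to every translate of the noise density $\eta_1$, and then invokes Wiener's Tauberian theorem (the $L^2$ version, which is where the measure-zero condition on the zeros of $\phi$ enters) to conclude that the translates span a dense subspace of $L^2(\mathbb{R})$, forcing $f_x(\cdot|t)=f_y(\cdot|t)$. You instead work on the Fourier side throughout: factor the characteristic functions, cancel $\phi$ off its null set, and upgrade to everywhere-equality via continuity of characteristic functions plus density of the complement of a Lebesgue-null set. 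The two arguments are essentially Fourier duals of one another, but yours is more elementary (no Tauberian theorem) and, as you note, does not use E.1 at all — it requires neither that $X(t),Y(t)$ admit densities nor that those densities (or $\eta_1$) lie in $L^2$, an integrability point the paper's $L^2$-orthogonality argument quietly relies on. The paper's approach, in exchange, stays entirely in the space where the data live and makes the role of the noise density as a "test function family" explicit. Both proofs use the hypothesis on the zero set of $\phi$ in exactly the same place and for the same purpose, and both correctly reduce the backward implication to the i.i.d.\ assumption $e_1=^d e_2$ in E.2.
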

Based on the above theorem, we have the important property that $ \widetilde{\text{MED}} (X,Y) = 0 \Leftrightarrow X(t)  =^d Y(t) $ for almost all $t \in [0,1]$. As discussed before, $\widetilde{\text{MED}}(X,Y)$ can be consistently estimated by $\text{MED}_n(\widetilde{\mathbf{Z}}) $ and the test can be conducted via permutations. Consequently,  the data contaminated with measurement errors can still be used to test marginal homogeneity as long as $e_1 = ^d e_2$. We make this statement rigorous below.  

\begin{corollary} \label{cor:powerconta}
Under Assumptions \ref{ass:1},  \ref{ass:power}, \ref{ass:2}, \ref{ass:3}. For any fixed $S > 1/ \alpha$,
\begin{align*}
P_{H_A} \left( \widetilde{p} \leq \alpha \right) \rightarrow 1.
\end{align*}
\end{corollary}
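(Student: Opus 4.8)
The plan is to run the same argument as in the proof of Theorem \ref{thm:power}, but reading everything off the contaminated statistic $\text{MED}_n(\widetilde{\mathbf{Z}})$ in place of $\text{MED}_n(\mathbf{Z})$ and substituting the measurement-error analogues developed in this subsection. Concretely, I need three ingredients: (i) the population target of $\text{MED}_n(\widetilde{\mathbf{Z}})$, namely $\widetilde{\text{MED}}(X,Y)$, is strictly positive under $H_A$; (ii) $\text{MED}_n(\widetilde{\mathbf{Z}})$ is consistent for it; and (iii) each permuted statistic $\text{MED}_n(\Pi_l \cdot \widetilde{\mathbf{Z}})$ converges to $0$ in probability.

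For (i), recall that $\widetilde{\text{MED}}(X,Y) \geq 0$ and vanishes if and only if $X(t) + e_1 =^d Y(t) + e_2$ for almost every $t$, and that its integrand $2H_1(t,t) - H_2(t,t) - H_3(t,t)$ is continuous in $t$ by Assumption \ref{ass:2} (D.3). Under $H_A$ there is a set $\mathbb{T}$ with $\lambda(\mathbb{T}) > 0$ on which $X(t) \neq^d Y(t)$, and since Assumption \ref{ass:3} is in force, Theorem \ref{thm:2} gives $X(t) + e_1 \neq^d Y(t) + e_2$ for every $t \in \mathbb{T}$; hence the integrand is strictly positive on $\mathbb{T}$ and nonnegative elsewhere, so $\widetilde{\text{MED}}(X,Y) =: c > 0$. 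For (ii), Corollary \ref{cor:2} yields $\text{MED}_n(\widetilde{\mathbf{Z}}) \rightarrow^p \widetilde{\text{MED}}(X,Y) = c$. For (iii), the reasoning behind Theorem \ref{thm:per} carries over: because $e_1 =^d e_2$, for a uniformly drawn permutation the statistic $\text{MED}_n(\Pi_l \cdot \widetilde{\mathbf{Z}})$ targets the marginal energy distance between two independent draws from the pooled empirical distribution, which is identically $0$, and the same fluctuation bound — uniform over $\pi \in \mathbb{P}_{n+m}$ under Assumption \ref{ass:power}, hence valid for a random $\Pi_l$ independent of the data — gives $\text{MED}_n(\Pi_l \cdot \widetilde{\mathbf{Z}}) \rightarrow^p 0$.

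With these in hand, fix $S > 1/\alpha$ and set $A_n = \{\text{MED}_n(\widetilde{\mathbf{Z}}) > c/2\} \cap \bigcap_{l=1}^{S-1}\{\text{MED}_n(\Pi_l \cdot \widetilde{\mathbf{Z}}) < c/2\}$. On $A_n$ every indicator in $\widetilde{p} = \frac{1}{S}\{1 + \sum_{l=1}^{S-1}\mathbb{I}_{\{\text{MED}_n(\Pi_l \cdot \widetilde{\mathbf{Z}}) \geq \text{MED}_n(\widetilde{\mathbf{Z}})\}}\}$ is $0$, so $\widetilde{p} = 1/S < \alpha$. Since $S - 1$ is fixed, a union bound combined with (ii) and (iii) gives $P(A_n) \rightarrow 1$, whence $P_{H_A}(\widetilde{p} \leq \alpha) \geq P(A_n) \rightarrow 1$.

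The main obstacle is ingredient (iii): in the contaminated setting the two groups carry noise from $e_1$ and $e_2$, so one must genuinely use $e_1 =^d e_2$ (Assumption \ref{ass:3} (E.2)) to argue that shuffling across the pooled sample still yields a statistic with vanishing bias and with variance controlled by the $\phi_{\pi(i)}$ bound uniformly in $\pi$ — i.e. to obtain the contaminated analogue of Theorem \ref{thm:per}. The remaining steps are a direct transcription of the noiseless case, with Corollary \ref{cor:2} and Theorem \ref{thm:2} playing the roles of Theorem \ref{thm:1} and the pointwise metric property of MED.
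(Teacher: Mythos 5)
Your proof is correct and follows essentially the same route as the paper: establish the contaminated analogue of Theorem \ref{thm:per} so that $\text{MED}_n(\Pi_l\cdot\widetilde{\mathbf{Z}})\rightarrow^p 0$, invoke Corollary \ref{cor:2} for consistency and Theorem \ref{thm:2} to guarantee $\widetilde{\text{MED}}(X,Y)>0$ under $H_A$, then conclude by the same union bound as in Theorem \ref{thm:power}. One small remark: the identical-error condition $e_1 =^d e_2$ is genuinely needed in your step (i) (via Theorem \ref{thm:2}, so that the noise does not mask the alternative), not in step (iii) --- the permutation argument only uses the exchangeability induced by the uniform random permutation of the pooled sample, under which all three estimated surfaces share the same limit, and it goes through regardless of the error distributions.
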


The circumstance of
Identically distributed errors among the two samples is a strong assumption that nevertheless can be satisfied in
 many real situations, for example,  when the curves $\{X_i\}$ and $\{ Y_j \}$ are measured by the same instrument. The PBC data in Section \ref{sec:realdata} underscore this phenomenon. 

When $ e_1 \neq ^d e_2 $, not all is lost and we show that some workarounds exist. In particular,
we propose an error-augmentation method that raises the noise of one sample  to the same level as that of the other sample. 


For instance, suppose that $e_1 \sim N(0, \sigma_{1}^2)$ and $ e_2 \sim N(0, \sigma_{2}^2)$. Then the variances $\sigma_{1}^2$ and $\sigma_{2}^2$ can be estimated consistently using the R package ``fdapace" \citep{fdapace} under both intensive  and sparse designs with  estimates $\widehat{\sigma}_{1}^2 $ and $\widehat{\sigma}_{2}^2$ respectively.  \cite{yao2005functional} showed that 
\begin{align*}
\left| \widehat{\sigma}_1 - \sigma_1 \right| = O_p\left( \frac{1}{\sqrt{n}} \left( \frac{1}{h_{G_x}^2} + \frac{1}{h_{V_x}} \right) \right), 
\end{align*}
where $ h_{G_x}$ and $ h_{V_x} $  are the bandwidth parameters for estimating the covariance function $\text{cov}(X(s), X(t))$ and the diagonal function $ \text{cov}(X(t), X(t)) + \sigma_1^2 $ respectively.  A different estimator for $\widehat{\sigma}_1$ that  has a better convergence rate is provided in \cite{linandwang2022}. 
An analogous result holds for $ \widehat{\sigma}_2 $. Then, by adding additional Gaussian white noise, we  obtain the error-augmented data $\{ \breve{x}_{ij} \}$, $\{ \breve{y}_{ij}\}$ as follows, 
\begin{align*} 
\left\lbrace 
\begin{array}{ll}
\left.
\begin{array}{ll}
 \breve{x}_{ij} = \widetilde{x}_{ij} + \epsilon_{ij}, & \text{ for } i=1,2, \dots, n, j=1, \dots, N_i,    \\
  \breve{y}_{ij} = \widetilde{y}_{ij}, & \text{ for } i=n+1, \dots, n+m, j=1, \dots, N_i,
\end{array} \right\rbrace
  & \text{ if }  \widehat{\sigma}_{1}^2 < \widehat{\sigma}_{2}^2, \\ [\bigskipamount]
  \left.
  \begin{array}{ll}
    \breve{x}_{ij} = \widetilde{x}_{ij},  & \text{ for } i=1,2, \dots, n, j=1, \dots, N_i , \\
    \breve{y}_{ij} = \widetilde{y}_{ij}+ \epsilon_{ij}, & \text{ for } i=n+1, \dots, n+m, j=1, \dots, N_i,
  \end{array} \right\rbrace
 & \text{ if }  \widehat{\sigma}_{1}^2 > \widehat{\sigma}_{2}^2,
\end{array} \right. 
\end{align*}
where $\{ \epsilon_{ij} \} \sim^{i.i.d} N(0,  |\widehat{\sigma}_{2}^2 - \widehat{\sigma}_{1}^2| )$.  With $\breve{\mathbf{Z}}$ being the combined error-augmented data, the proposed test is:
	$$
	\text{Reject } H_0 \text{ if } \breve{p} \leq \alpha,
	$$
where $ \breve{p} = \frac{1}{S} \big\{ 1 + \sum_{l=1}^{S-1}  \mathbb{I}_{ \{ \text{MED}_n( \Pi_l \cdot \breve{\mathbf{Z}}) \geq \text{MED}_n(  \breve{\mathbf{Z}})  \} } \big\}$.

The normal error assumption is common in practice. The  variance augmentation approach also works for any parametric family of distributions that is close under convolution (i.e., the sum of two independent distributions from this family is also a member of the family) and that has the property  that the first two moments of a distribution determine the distribution.

\section{Numerical Studies} \label{sec:sim}
In this section, we examine the proposed testing procedure for both synthetic and real data sets. 

\subsection{Performance on simulated data}
For simulations,  we set $\alpha = 0.05$ and perform 500 Monte Carlo replications with 200 permutations for each test. The following example is used to examine the size of our test.


\begin{example} \label{exp:1}
The stochastic processes $\{ X_i \}_{i=1}^n$ are i.i.d copies of $X$ and $\{ Y_{i} \}_{i=n+1}^{n+m}$ are i.i.d copies of $Y$,  where for $t\in [0, 1]$,
\begin{align*}
& X(t)  = \xi_1 \left( -\cos(2\pi t) \right) + \xi_2 \left( \sin(2 \pi t) \right) ,  \\
& Y(t) = \varsigma_1 \left( -\cos(2\pi t) \right) + \varsigma_2 \left( \sin(2 \pi t) \right), 
\end{align*}
and $  \xi_1, \xi_2, \varsigma_1, \varsigma_2, \sim^{i.i.d} N(0,1) $. These curves are observed at discrete time points
\begin{align*}
 & \widetilde{x}_{ij} = X_i(T_{ij}) + e_{ij}, i = 1,2, \dots, n, j=1,2, \dots, N_i, \\
 & \widetilde{y}_{ij} = Y_i(T_{ij}) + e_{ij}, i = n+1, \dots, n+m, j=1,2, \dots, N_i, 
\end{align*}
where $ \{ T_{ij} : i = 1, 2, \dots, n+m, j = 1,2, \dots, N_i \} \sim^{i.i.d} \text{Uniform}[0,1] $ and the measurement errors are Gaussian 
\begin{align*}
& \{ e_{ij}  : i = 1,2, \dots, n, j=1,2, \dots, N_i \} \sim^{i.i.d} N(0, \sigma_{1}^2), \\
& \{ e_{ij} : i=n+1, \dots, n+m, j=1,2, \dots, N_i \} \sim^{i.i.d} N(0, \sigma_{2}^2).
\end{align*}
\end{example}

\begin{table} 
\centering
\begin{tabular}{cccccc} \hline
$(n,m)$	 & $\sigma_{1}$ & $\sigma_{2}$ & $\text{MED}$   & FPCA \\ \hline
\multirow{3}{*}{$(100, 70)$}   & 0 & 0  & 0.05 & 0 \\
 & 0.2 & 0.2   & 0.04 & 0.004 \\
 & 0.05 & 0.25  & 0.058 & 0.006 \\ \hline
\multirow{3}{*}{$(150, 130)$}   & 0 & 0    & 0.054 & 0.002 \\
  & 0.2 & 0.2    & 0.046 & 0.002 \\
 & 0.05 & 0.25   & 0.056 & 0 \\ \hline
\end{tabular}	
\caption{Comparison of Type I errors  under the sparse design.}
\label{tab:size}
\end{table}

The quantities $\{N_i\}$ are used to control the sparsity level. Under sparse designs, $N_i$ are uniformly selected from $2\sim10$ for all $i=1,2, \dots, n+m$, and  $ \sigma_{1}^2 $ and $\sigma^2_{2}$ account for the magnitude of measurement errors. If $ \sigma_{1}^2 = \sigma^2_{2} =0 $, this corresponds to the case that there is no measurement error. If $ \sigma_{1}^2 \neq \sigma^2_{2} $, the error augmentation method described in Section \ref{sec:correct} is used. When applying the $\text{MED}$-based permutation test, the bandwidth parameters are selected as $h_x=h_y=0.2$.

Table \ref{tab:size} contains the size comparison results under this sparse design. As a baseline method for comparison, the FPCA approach is also included, where we first impute the principal scores and then apply the energy distance on the imputed scores. To be more specific, the first step is to reconstruct the principal scores using the R package ``fdapace" \citep{fdapace}. Then, a two sample tests for multivariate data is applied on the recovered scores. For this,  we choose the energy distance based procedure and conduct the hypothesis testing via the R package ``energy" \citep{eneryRpackage}. The FPCA approach has two drawbacks. First, the scores can not be estimated consistently; second, the infinite dimensional vector of scores has to be truncated for computational purposes, which  causes information loss. When the Gaussian errors have different variances, the same error-augmentation method is applied to the FPCA approach.  From Table \ref{tab:size}, we see that the $\text{MED}$-based methods have satisfactory size, while the FPCA based approach is undersized, likely due to the inherent inaccuracy in the imputed scores. 

\begin{example} \label{exp:size}
The stochastic processes $\{ X_i \}_{i=1}^n$ are i.i.d copies of $X$, $\{ Y_{i} \}_{i=n+1}^{n+m}$ are i.i.d copies of $Y$, where for $t\in [0,1]$, 
\begin{align*}
& X(t)  = \xi_1 \left( -\cos(2\pi t) \right) + \xi_2 \left( \sin(2 \pi t) \right) ,  \\
& Y(t) = \varsigma_1 t ^2 + \varsigma_2 \left( \sqrt{1- t^4} \right), 
\end{align*}
  $  \xi_1, \xi_2 \sim^{i.i.d} N(0,1), $ and  $\varsigma_1, \varsigma_2$ are independently sampled from the following mixture of Gaussian distributions, 
\begin{align*}
& P(\varsigma \leq a) = \frac{1}{2} P(N(\mu_{\varsigma},\sigma^2_{\varsigma}) \leq a) + \frac{1}{2 } P(  N(-\mu_{\varsigma},\sigma^2_{\varsigma}) \leq a ) \text{ for any } a \in \mathbb{R}.
\end{align*} 
The curves are observed at discrete time points
\begin{align*}
 & \widetilde{x}_{ij} = X_i(T_{ij}) + e_{ij}, i = 1,2, \dots, n, j=1,2, \dots, N_i, \\
 & \widetilde{y}_{ij} = Y_i(T_{ij}) + e_{ij}, i = n+1, \dots, n+m, j=1,2, \dots, N_i, 
\end{align*}
where $ \{ T_{ij} : i = 1, 2, \dots, n+m, j = 1,2, \dots, N_i \} \sim^{i.i.d} \text{Uniform} \ [0,1] $ and the measurement errors are Gaussian,  
\begin{align*}
& \{ e_{ij}  : i = 1,2, \dots, n, j=1,2, \dots, N_i \} \sim^{i.i.d} N(0, \sigma_{1}^2), \\
& \{ e_{ij} : i=n+1, \dots, n+m, j=1,2, \dots, N_i \} \sim^{i.i.d} N(0, \sigma_{2}^2).
\end{align*}
\end{example}

In the above example, $X(s)$ and $Y(t)$ both have zero mean functions. The variances of $\varsigma_1$ and $\varsigma_2$ are the same and equal to $ \mu_{\varsigma}^2 + \sigma^2_{\varsigma} $.  By selecting $\mu_{\varsigma}$ and $\sigma^2_{\varsigma}$ such that $ \mu_{\varsigma}^2 + \sigma^2_{\varsigma} =1 $, we have $\var(X(t)) = \var(Y(t))$.  Under this scenario, $X(s)$ and $Y(t)$ have the same marginal mean and variance, but different marginal distributions. In this example, we set $\mu_{\varsigma} = 0.98$ and $\sigma_{\varsigma} = 0.199$. 

The power comparison results are provided in Table \ref{tab:power}. The FPCA approach is not included for power analysis as we have already  shown  in Example \ref{exp:1} that it is undersized. Instead, we compare the power of $\text{MED}$-based tests under the sparse design with the result from dense regular data, where the sparse data are uniformly sampled. 
The same error-augmentation approach is applied for $\text{MED}$ with dense regular data when $ \sigma_1 \neq \sigma_2 $. From Table \ref{tab:power}, we can observe a moderate power drop from dense to sparse data. In addition, as the sample size increases, the power grows above 0.94 when $\sigma_1=\sigma_2$ and almost doubles when $\sigma_1 \neq \sigma_2$ under the sparse design.

\begin{table}
	\centering
	\begin{tabular}{ccccc} \hline
		$(n,m)$	 & $\sigma_{1}$ & $\sigma_{2}$ & $\text{MED}$ (dense) & $\text{MED}$ (sparse) \\ \hline
		\multirow{3}{*}{$(100, 70)$}  & 0 & 0    & 0.998 & 0.806 \\
		 & 0.2 & 0.2   & 0.866 & 0.53 \\
		 & 0.05 & 0.25  & 0.76 & 0.412 \\ \hline
		\multirow{3}{*}{$(150, 130)$}  & 0 & 0    & 1 & 0.998 \\
		  & 0.2 & 0.2   & 1 & 0.946 \\
		  & 0.05 & 0.25   & 0.998 & 0.786 \\ \hline 
	\end{tabular}	
	\caption{Power comparison.} 
	\label{tab:power}
\end{table}

\subsection{Applications to real  data} \label{sec:realdata}
In this subsection, we apply the proposed $\text{MED}$-based tests to two real data sets. \\
\noindent \textbf{PBC Data}: \ The first data set  is the primary biliary cirrhosis (PBC) data from Mayo Clinic \citep{fleming2011counting}. This data set is from a clinical trial studying primary biliary cirrhosis of the liver. There were 312 patients assigned to either the treatment or control group. The drug D-penicillamine is given to the treatment group. Here, we are interested in testing the equality of the marginal distributions of Prothrombin time, which is a blood test that measures how long it takes blood to clot.  
The trajectories of Prothrombin time for different subjects are plotted in Figure \ref{fig:straw}, there are on average 6 measurements per subjects.  
For our tests, the bandwidth is set to be 2. 
Here, the equal distribution assumption for errors seems to work (the estimated variances for treatment and control group are 0.96 and 1.009 respectively). 
By using 200 permutations, the $p$-value of the $\text{MED}$-based test is 0.54, which means that there is not enough evidence to conclude that the maginal distributions of Prothrombin time are different between the two groups. This conclusion matches with existing knowledge that D-penicillamine is ineffective to treat primary biliary cirrhosis of liver. \\


\noindent \textbf{Strawberry Data}: \ In the  food industry, there is a continuing interest in distinguishing the pure fruit purees from the adulterated ones \citep{straw}. One practical way to detect adulteration is by looking at the spectra of the fruit purees. Here, we are interesting in testing the marginal distribution between the spectra of strawberry purees (authentic samples) and non-strawberry purees (adulterated strawberries and other fruits). The strawberry data can be downloaded from the UCR Time Series Classification Archive \citep{UCRArchive2018} (\url{https://www.cs.ucr.edu/~eamonn/time_series_data_2018/}). The single-beam spectra of the purees were normalized to back-ground spectra of water and then transformed into absorbance units. The spectral range was truncated to 899-1802 $\text{cm}^{-1}$ (235 data points).  The two samples of spectra are plotted in Figure \ref{fig:straw} and more information about this data set can be found at \cite{straw}. The estimated variances of the measurement errors are 0.000279 and 0.00031 for the two samples, which indicates that there are practically no measurement errors. To check the performance of our method, we analyze the data using all 235 measurements as well as a sparse subsamples  that contain  $2$ to  $10$ observations per subject. The R package ``energy" is applied for the complete data. Both tests are conducted with 200 permutations and have $p$-value 0.005.  
Thus, we have strong evidence to conclude that the marginal distributions between the spectra of strawberry and non-strawberry purees are significantly different and our test produced similar results regardless of the sampling plan.

\begin{figure}
\centering
\begin{subfigure}{.5\textwidth}
  \centering
    \caption{PBC}
  \label{fig:sub1}
  \includegraphics[scale=0.5]{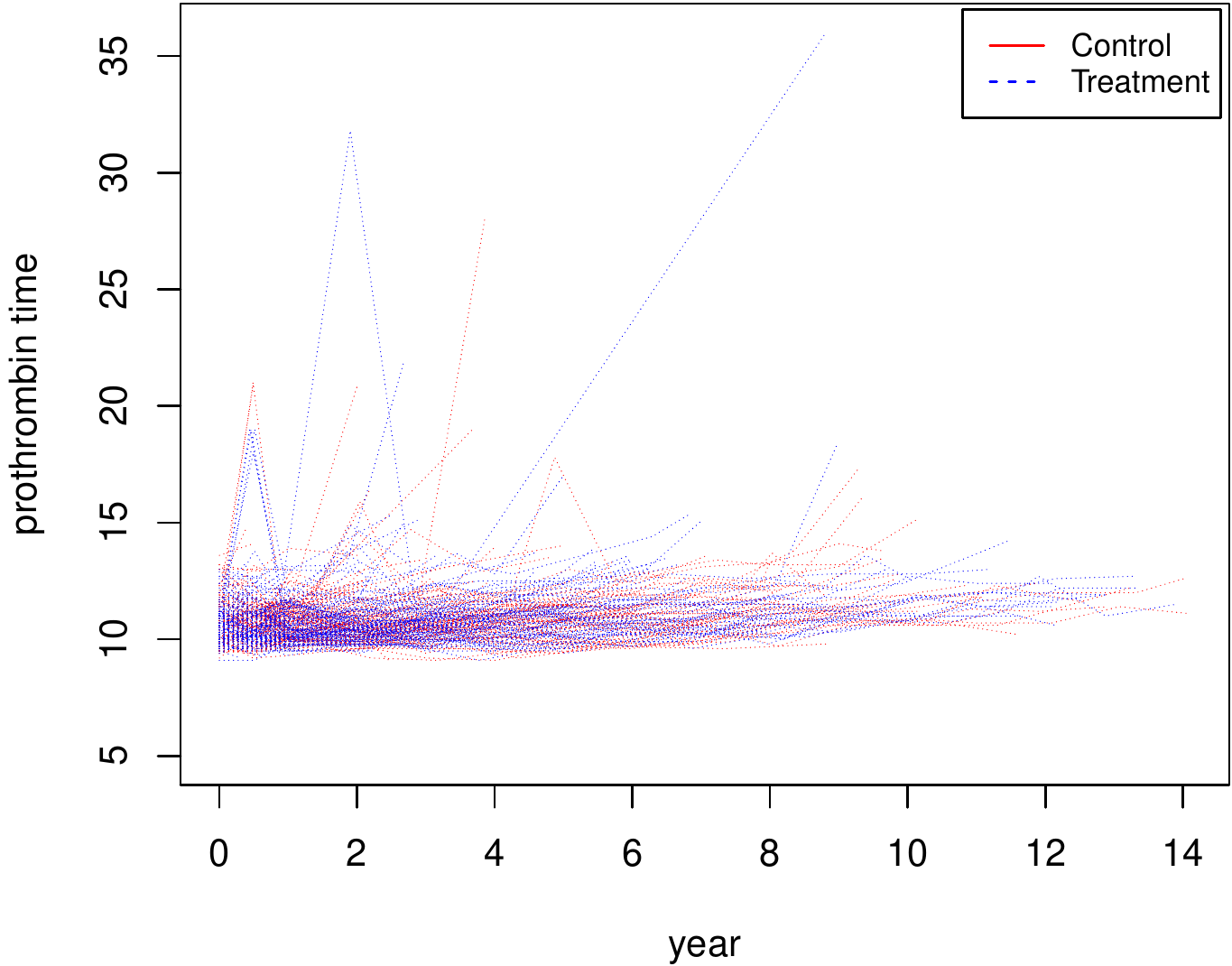}
\end{subfigure}%
\begin{subfigure}{.5\textwidth}
  \centering
    \caption{Strawberry}
  \label{fig:sub2}
  \includegraphics[scale=0.5]{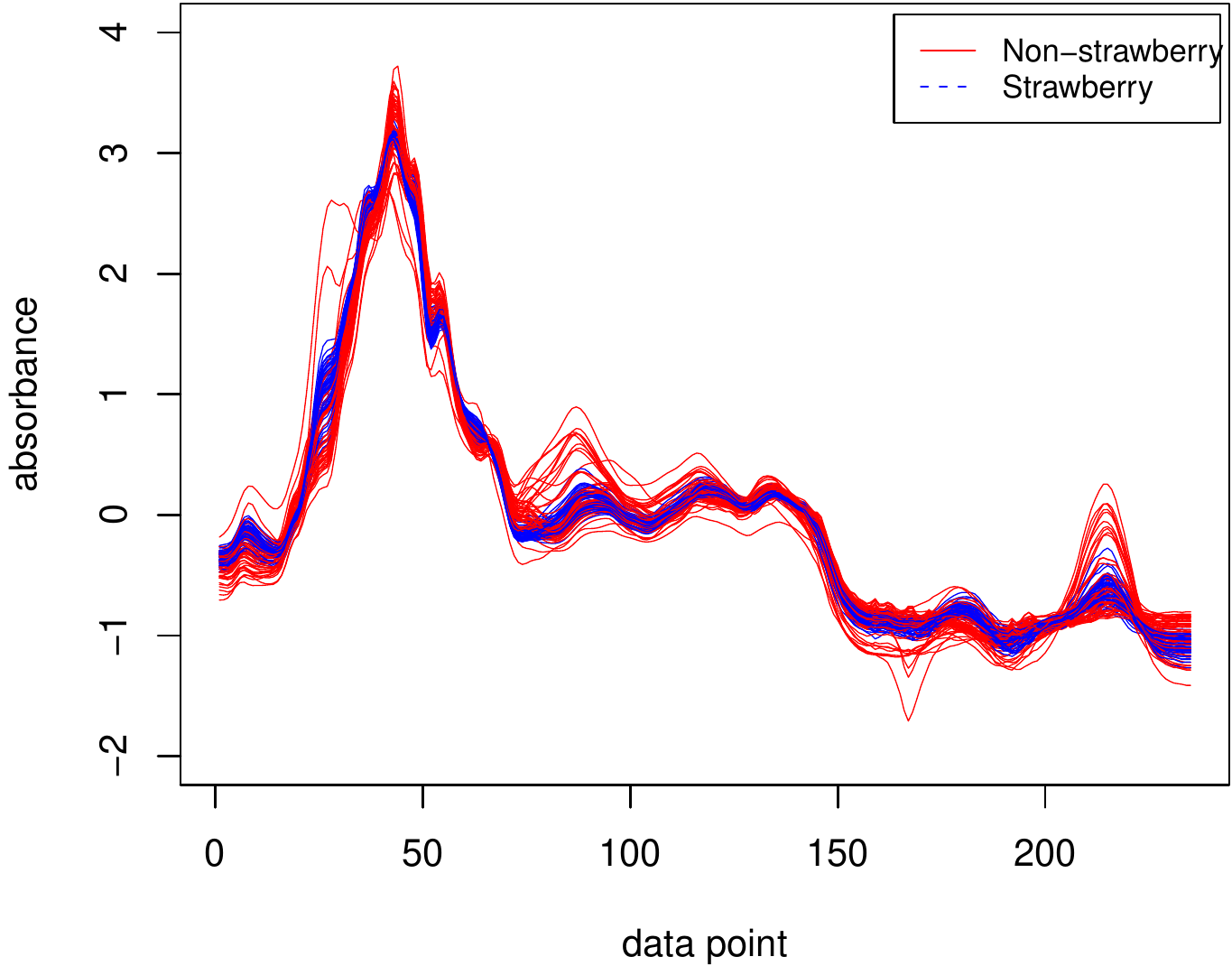}
\end{subfigure}
\caption{Trajectories of real data.}
\label{fig:straw}
\end{figure}

\section{Conclusion} \label{sec:con}
The literature on testing homogeneity for functional data is scarce probably because most approaches rely on intense measurement schedules and the hope that measurement errors can be addressed by presmoothing the data.  Since reconstruction of noise-free functional data is not feasible for sparsely observed functional data, a test of homogeneity is infeasible. In this work, we show what is feasible for sparse functional data, a.k.a. longitudinal data, and propose a test of marginal homogeneity that 
adapts to the sampling plan and provides the corresponding convergence rate.  Our test is based on  Energy distance with a focus on testing the marginal homogeneity. 
To the best of our knowledge, this is the only nonparametric test with theoretical guarantees under sparse designs, which are ubiquitous. 

There are several twists in our approach, including the handling of asynchronized longitudinal data and the unconventional way that measurement errors affect the method and theory.  
The asynchronization of the data can be overcome completely as we demonstrated in Section 2.1, but the handling of measurement errors requires some compromise when the distributions of the measurement errors are different for the two samples. This is the price one pays for lack of data, and is not due to the use of the $L_1$ norm associated with testing the marginal homogeneity, as an $L^2$ norm for testing full homogeneity would also face the same challenge with measurement errors unless a presmoothing step has been employed to eliminate the measurement errors.   As we mentioned in Section 1 this would require a super intensive sampling plan well beyond the usual requirement for dense or ultra dense functional data \citep{zhang2016sparse}.  
While the new approach may involve error-augmentation, numerical results show that the efficiency loss is minimal. Moreover, such an augmentation strategy is not uncommon. For instance, an error augmentation method has also been adopted in the SIMEX approach \citep{doi:10.1080/01621459.1994.10476871} to deal with measurement errors for vector data.

While testing marginal homogeneity has its own merits and advantages over a full-fledged test of homogeneity, our intention is not to particularly endorse it.   Rather, we  point  out what is feasible and infeasible for sparsely or intensively measured functional data and develop theoretical support for the proposed test. To the best of our knowledge, we are the first to provide the convergence rate for the permuted statistics for sparse functional data. This proof and the proof of consistency for the proposed permutation test is non-conventional and different from the the multivariate/high-dimensional case.

\bibliographystyle{rss}
\bibliography{JRSSB}
\end{document}